 \newtheorem{theorem}{Theorem}
 \newtheorem{proposition}[theorem]{Proposition}
 \newtheorem{lemma}[theorem]{Lemma}
 \newtheorem{corollary}[theorem]{Corollary}
 \newtheorem{definition}[theorem]{Definition}
 \newtheorem{remark}[theorem]{Remark}
\newcommand{\mc}[1]{\mathcal{#1}}
\newcommand{\mr}[1]{\mathrm{#1}}
\newcommand{\mb}[1]{\mathbb{#1}}
\newcommand{\tr}{\mathrm{Tr}} 
\newcommand{\Trp}[2]{\mathrm{Tr}_{#1}\left[ #2 \right]}
\newcommand{\supp}{\mathrm{supp}}
\newcommand{\id}{\mb{I}}
\newcommand{\argmin}{\mathrm{argmin}}
\newcommand{\mcH}{\mc{H}}
\newcommand{\mcB}{\mc{B}}
\newcommand{\mcS}{\mc{S}}
\newcommand{\N}{\mb{N}}
\newcommand{\R}{\mb{R}}
\newcommand{\CC}{\mb{C}}
\newcommand{\norm}[1]{\left\Vert #1 \right\Vert}
 \newcommand{\proj}[1]{\ket{#1}\bra{#1}}
\newcommand{\Dmax}[1]{D_{\text{max}}^{#1}}
\newcommand{\BH}[1]{\mcB \left( {#1} \right)}
\newcommand{\SH}[1]{\mcS \left( {#1} \right)}
\newcommand{\chn}{\mc{E}}
\newcommand{\co}{CO$_2$}
\newcommand{\fu}{Dahlem Center for Complex Quantum Systems, Freie Universit{\"a}t Berlin, 14195 Berlin, Germany}
\newcommand{\imperial}{Department of Computing, Imperial College London, London SW7 2AZ, U.K.}
\newcommand{\ucl}{Department of Physics and Astronomy, University College London, London WC1E 6BT, U.K.}
\newcommand{\sing}{School of Physical and Mathematical Sciences, Nanyang Technological University, 637371 Nanyang, Singapore}
\begin{document}
\title{Bounding the resources for thermalizing many-body localized systems}
\author{Carlo Sparaciari} 
\affiliation{\imperial}
\affiliation{\ucl}
\author{Marcel Goihl}
\affiliation{\fu}
\author{Paul Boes}
\affiliation{\fu}
\author{Jens Eisert}
\email{jense@zedat.fu-berlin.de}
\affiliation{\fu}
\author{Nelly Huei Ying Ng}
\affiliation{\fu}
\affiliation{\sing}
\begin{abstract}
Understanding under which conditions physical systems thermalize is a long-standing question in many-body physics. While generic quantum systems thermalize, there are known instances where thermalization is hindered, for example in many-body localized (MBL) systems. Here we introduce a class of stochastic collision models coupling a many-body system out of thermal equilibrium to an external heat bath. We derive upper and lower bounds on the size of the bath required to thermalize the system via such models, under certain assumptions on the Hamiltonian. We use these bounds, expressed in terms of the max-relative entropy, to characterize the robustness of MBL systems against externally-induced thermalization. Our bounds are derived within the framework of resource theories using the convex split lemma, a recent tool developed in quantum information. We apply our results to the disordered Heisenberg chain, and numerically study the robustness of its MBL phase in terms of the required bath size.
\end{abstract}
\maketitle
\section{Introduction}
When pushed out of equilibrium, closed interacting quantum many-body systems generically 
relax to an equilibrium state, where local subsystems can be described using thermal ensembles that only depend on the energy of the initial state. 
While this behaviour is plausible from the perspective of quantum statistical mechanics~\cite{Deutsch1991,
Srednicki1994,Popescu06,1408.5148,ngupta_Silva_Vengalattore_2011}, it is far from
clear which local properties are responsible for the emergence of thermalization. The discovery of
many-body localization (MBL) offers a fresh perspective into this question, as this effect occurs
in interacting many-body systems preventing them from actually thermalizing~\cite{Oganesyan,Schreiber}.
Examples of systems that are non-thermalizing, like integrable systems, have been known before, but these
have been fine-tuned such that small perturbations restore thermalization. Many-body localization is strikingly
different in this respect, as its non-thermalizing behaviour appears to be robust to changes in the
Hamiltonian~\cite{abanin_colloquium:_2019}. A related open question, recently considered in several papers,
is whether MBL is stable with respect to its own dynamics when small ergodic regions are present~\cite{luitz_how_2017,
ponte_pedro_thermal_2017,hetterich_noninteracting_2017,barisic_incoherent_2009,Goihl19}, or
when the system is in contact with an actual external environment~\cite{nandkishore_spectral_2014,
fischer_dynamics_2016,levi_robustness_2016,johri_many-body_2015}. This is both a critical question
for the experimental realization of systems exhibiting MBL properties, and for its fundamental implications
on the process of thermalization in quantum systems.
\par
The main focus of this work is to investigate the robustness of the MBL phase under instances of external dissipative processes. To do so, we introduce a physically-realistic class of interaction models, describing the interaction between a many-body system and a finite-sized thermal environment. Within these models, the interactions are described in terms of energy-preserving stochastic collisions occurring between the system (or regions thereof) and sub-regions of the bath. During the interactions, system and bath can be either weakly or strongly coupled. For this class of processes, we are able to derive analytical bounds on the minimum size of the bath required to thermalize the many-body system. We apply these bounds to the setting where the system is in the MBL phase, so as to characterize the robustness of this phase with respect to the coupling with an external environment. It is woth noting, however, that the bounds obtained hold for general many-body systems out of thermal equilibrium.
\par
It is key to the approach taken here -- and one of the merits of this work -- that in order to arrive at our results, 
we make use of tools from quantum information theory, tools that might at first seem somewhat alien to the 
problem at hand, but which turn out to provide a powerful machinery. We demonstrate this by using a 
technical result known as the convex split
lemma~\cite{anshu_quantum_2017,anshu_quantifying_2018}, to derive the quantitative bounds on the bath size required
for a region of the spin lattice to thermalize. Given that MBL phases are challenging to study theoretically,
and most known results are numerical in nature \cite{Luitz,Devakul,PhysRevLett.118.017201,Prosen,Pollmann,Wahl,AugstineMBL}, 
our work provides a fresh approach in understanding
such phases from an analytical perspective. The convex split lemma has originally been  derived in the context
of quantum Shannon theory, which is the study of compression and transmission rates of quantum information.
Our main contribution is to connect this mathematical result to a class of thermodynamic models
which can be used to describe thermalization processes in quantum systems. This gives rise to surprisingly
stringent and strong results. Note, however, in the approach taken it is assumed that systems thermalize close to exactly, a requirement that will be softened in future work.
\par
As part of our results, we find that the max-relative entropy~\cite{datta_min-_2009,tomamichel_quantum_2016}
and its smoothed version emerge as operationally significant measures, that quantify the robustness of the MBL phase
in a spin lattice. The max-relative entropy is an element within a family of entropic measures
that generalize the R{\'e}nyi divergences~\cite{erven_renyi_2014} to the quantum setting.
In order to illustrate the practical relevance of our results, we consider a specific system exhibiting
the MBL phase, namely the disordered Heisenberg chain. Employing exact diagonalization, we numerically
compute the value of the max-relative entropy as a function of the disorder and of
the size of the lattice region that we are interested in thermalizing. Our findings suggest that the MBL phase is
robust to thermalization despite being coupled to a finite external bath, 
under our collision models, indicating that such models allow for a conceptual understanding 
of the MBL phase stability.
This extends the narrative of Refs.~\cite{nandkishore_spectral_2014,fischer_dynamics_2016,levi_robustness_2016}, which
find that MBL is thermalized when coupled to an infinite sized bath.
Moreover, our numerical simulations show that the max-relative entropy signals the transition from the ergodic to 
the MBL phase. 
\section{Results}
\subsection{Thermalization setting}\label{sec:setting}
We first set up some basic notation. Given some Hamiltonian $H_S$ of a system $S$ with Hilbert space $\mcH_S$,
we define the thermal state with respect to inverse temperature $\beta = 1/(k_B T)$ as the quantum state
\begin{align}
    \tau_\beta(H_S) := \frac{e^{-\beta H_S}}{\tr(e^{-\beta H_S})}.
\end{align}
In what follows, we model the process of thermalization of $ S $ with an external heat bath $B$. In particular, let $H_S$
and $ H_B $ denote the Hamiltonian of $ S $ and $ B $ respectively. If $S$ is initially in a state $\rho$, then,
for fixed $\beta$ and any $\epsilon > 0$, we say that a global
process $\chn : \BH{\mcH_S \otimes \mcH_B} \rightarrow \BH{\mcH_S \otimes \mcH_B}$ $\epsilon$-thermalizes
the system $S$ if
\begin{equation}
\label{eq:def_thermalizing}
\norm{\chn \left( \rho \otimes \tau_\beta(H_B) \right) - \tau_\beta(H_S) \otimes \tau_\beta(H_B)}_1 \leq \epsilon,
\end{equation}
where $\norm{\cdot}_1$ is the trace norm. Intuitively, this corresponds to the situation where the process $\chn$ 
acts on the compound of the initial state of the system $\rho$ and the bath state $\tau_\beta(H_B)$, and brings
the system state close to its thermal state $\tau_\beta(H_S)$ while leaving the bath mostly invariant.
We write 
\begin{equation}
\rho\xrightarrow[H_B,\epsilon]{\mathcal{E}} \tau_\beta(H_S)
\end{equation} if Eq.~\eqref{eq:def_thermalizing} holds.
It is worth noting that $\epsilon$-thermalization requires the global system $SB$ to be close to thermal after
the channel $\chn$ is applied. Monitoring the bath as well is necessary in order to avoid the possibility of trivial
thermalization processes in which the non-thermal state $\rho$ is simply swapped into the bath,
which would merely move the excitation out of the considered region, rather than describing a physically realistic
dissipation process. Thus, our notion of thermalization is different from previously considered ones,
where for instance the sole system's evolution is considered. At the same time, and as mentioned before,
it is a rather stringent measure,
in that close to full global thermalization is required. 
\par
Having introduced the basic notation and terminology, we now turn to the model used in this work. We consider
a spin lattice $V$, where each site is described by a finite-dimensional Hilbert space $\mcH$. The
Hamiltonian of the system is composed of local operators, i.~e.,
\begin{equation}
\label{eq:lattice_ham}
H_V = \sum_x H_x, 
\end{equation}
where $x$ is labeling a specific subset of
adjacent sites in the lattice and $H_x$ is the corresponding Hamiltonian operator whose support is limited to these sites. 
Within the lattice, we consider a local region $R \subseteq V$ with Hilbert space $\mcH_R$. We are interested in the
stability of the MBL phase with respect to stochastic collisions between the lattice region $R$ and an external thermal
bath $B$. In order to precisely re-cast this problem in terms of $\epsilon$-thermalization, we first need to detail our
choices for the initial state of the region $R$, the Hamiltonians $H_B$ and $H_R$, and the class of maps $\chn$ that
model the thermalization process.
\par
Given an initial state vector of the lattice $\ket{\psi(0)}$, if we consider closed evolution generated by the Hamiltonian dynamics, then the global system is always in a pure state. However, if the local subsystems eventually equilibrate, then the equilibrium state is given by partially tracing over the global infinite-time average~\cite{gogolin_equilibration_2016},
\begin{equation}
\label{eq:inf_time_average}
\omega := \lim_{T \to \infty} \frac{1}{T} \int_0^T \mr{d} t \, \ket{\psi(t)}\bra{\psi(t)},
\end{equation}
so that the state that describes region $R$ at the time of its first interaction with the bath is
$\omega_R = \Trp{R^c}{\omega}$, where we trace out the remaining of the lattice $R^c = V \backslash R$. To define a valid Hamiltonian $H_R$, a natural approach is often to disregard interactions between $R$ and the rest of the lattice $R^c$. As such, we consider the Hamiltonian $H_R = \sum_{x: x \subseteq R} H_x$, which includes only terms $H_x$ whose support is contained in $R$, and denote the corresponding thermal state $\tau_\beta(H_R)$. It is worthwhile to point out that that there is an alternative natural approach to defining thermal states of subsystems, namely $\hat\tau_\beta (H_R) = \tr_{R^c}\left[\tau_\beta(H_V)\right]$ the reduced state over the complement $R$ of the thermal state of the full lattice~\cite{kliesch_locality_2014}. These two thermal states are close to each other whenever the interaction terms between $ R $ and $ R^c $ in $ H_V$ are small. During our later simulations, we check the values of max-relative entropy using both versions of thermal states, and find that they produced similar values, implying that these different alternatives are actually not too dissimilar from each other for the disordered Heisenberg chain, which is expected, given that interactions are 2-local.
\par
There has been a large body of work in thermalization, where a large class of many-body systems can effectively act as their own “environment”~\cite{1408.5148,gogolin_equilibration_2016}, thus local observables tend to equilibrate towards the corresponding thermal values. However, there are exceptions to this, in particular whenever there are non-negligible interactions between subsystems, of which MBL systems are an example. Nevertheless, most such systems still do equilibrate, and this is the assumption we make throughout the paper, that $\omega$ as defined in Eq.~\eqref{eq:inf_time_average} exists.

We model the external thermal bath as a collection of $n-1$ copies of the region $R$ in thermal equilibrium. More
formally, $B$ is a system with Hilbert space $\mcH_B = \mcH_R^{\otimes n-1}$ and Hamiltonian
\begin{equation}
\label{eq:H_B_copiesofregion}
H_B =
\sum_{i=1}^{n-1} H_R^{(i)},
\end{equation} 
where the operator $H_R^{(i)} = \id_1 \otimes \ldots \otimes \id_{i-1} \otimes H_R \otimes
\id_{i+1} \otimes \ldots \otimes \id_{n-1}$ only acts non-trivially on the $i$-th subsystem of the bath.
With this choice of Hamiltonian, the initial state on $B$ is $\tau_\beta(H_B) = \tau_{\beta}(H_R)^{\otimes n-1}$.  
Such a choice for the bath is crucial to make our problem analytically tractable, but is also physically relevant
for experimental setups, where it is possible to engineer one dimensional systems
which are then coupled to a bath per site \cite{bordia2016coupling,bordia2017probing}
or a mixture of a system and bath species that interact via contact interactions \cite{abadal2018bath}. Moreover, we note that  for the model of system-bath interactions that we introduce below, any state transition that can be realised on $R$ with any heat bath Hamiltonian $H_B$ can also be realized, for some $n$, with a Hamiltonian of the form \eqref{eq:H_B_copiesofregion}~\cite{horodecki2013fundamental}. We also refer the interested reader to Supplementary Note 3, for a further discussion on bath choices.
\begin{figure}[t]
\centering
\includegraphics[width=0.4\textwidth]{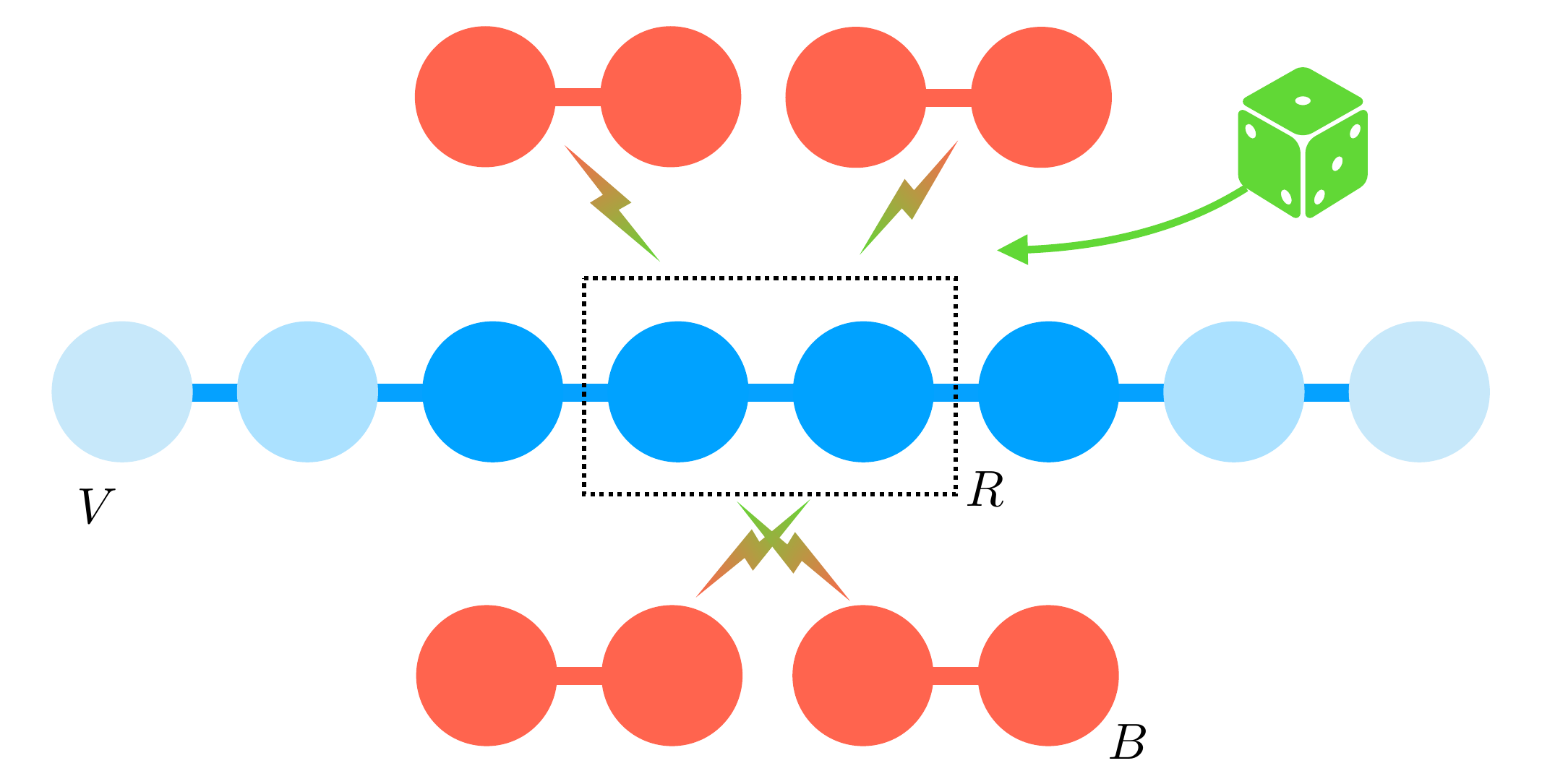}
\caption{{\bf Thermalization setting.} Thermalization of a region $R$ of an equilibrated lattice $V$ via stochastic collisions with an external bath $B$ (red). The collisions are modelled as randomly distributed, energy-preserving unitary interactions between $R$ and $B$, where interactions could be either between single subsystems of $B$ (top) or multiple ones (bottom). Our framework is inspired by the resource theoretic framework of thermal operations~\cite{janzing_thermodynamic_2000,brandao_resource_2013}, which have been used to investigate a wide variety of questions, such as the notion of work for microscopic systems~\cite{horodecki2013fundamental,brandao2015second}, quantum fluctuation theorems~\cite{alhambra2016fluctuating}, the third law~\cite{masanes_general_2017,wilming2017third}, and several other topics~\cite{woods2019maximum,halpern2019quantum,alhambra_revivals_2019}.}
	\label{fig:coupling}
\end{figure}
\par
We now turn to our model of the system-bath interactions, described via the following master equation,
\begin{equation}
\label{eq:master_eq}
\frac{\partial \, \rho_{RB}(t)}{\partial \, t}
=
\sum_{k} \frac{1}{r_k}
\left[
U_{RB}^{(k)} \, \rho_{RB}(t) \, U_{RB}^{(k) \dagger}
-
\rho_{RB}(t)
\right],
\end{equation}
where $\rho_{RB}$ is the global state on $R$ and $B$. This equation models a series of collisions, each described by a unitary operator $U_{RB}^{(k)} \in \BH{\mcH_R \otimes \mcH_B}$ acting non-trivially on the region $R$ and a subset of bath components, occurring at a given rate $r_k^{-1} > 0$ in time according to a Poissonian distribution (see Supplementary Note 1 for details). We consider elastic collisions, that is, we require that each unitary operator $U_{RB}^{(k)}$ conserves the global energy
\begin{equation}\label{eq:UHcommute}
[ U_{RB}^{(k)} , H_R + H_B ] = 0.
\end{equation}
It is worth noting that the above condition does not imply that the total energy of the lattice is conserved. In fact this is in general not the case, due to those operators $H_x$ in Eq.~\eqref{eq:lattice_ham}, with support on both $R$ and $R^c$. However, since by assumption these operator have support on at most $k$ adjacent lattice sites, if the region $R$ is sufficiently large one expects this boundary terms to contribute less and less to the total energy of the region. For high temperatures, such notions that boundary terms are negligible have rigorously been established \cite{kliesch_locality_2014}.
Note that Eq.~\eqref{eq:master_eq} is already in standard Lindblad form, and therefore describes a Markovian dynamical semi-group on $RB$. It is also important to note, however, that the process happening on the region $R$ is in principle non-Markovian, since $ B $ is modelled here explicitly, and can be a very small heat bath, which retains memory of the system's initial state. See Fig.~\ref{fig:coupling} for a graphical illustration of the setup.

In summary, the interaction model is general, in the sense that the unitary operators can act on a single subsystem (thus generating local Hamiltonian dynamics, for example), on two subsystems (these are the standard two-body interactions) or many more subsystems. Long-range interaction terms are also allowed, with the only restriction of Eq.~\eqref{eq:UHcommute} -- in the hope that one may work towards a relaxation in the future. We are nonetheless neglecting the interaction terms between the region $ R $ and $ R^c $. While this assumption is critical for allowing us to apply our framework to this problem, we note that there are situations where it is physically relevant -- for example, if the interaction between $ R $ and $ B $ occurs on a much shorter timescale than between $ R $ and $ R^c $, which is allowed by our class of interactions, as the rates $ \lbrace r_k \rbrace_k $ can be chosen to be high enough. Moreover, our results on bath size are independent on the system-bath coupling strength.
\par
We are finally in a position to define our central measure of robustness to thermalization. Let $\mathfrak{E}_{n}$ denote the set of quantum channels (i.e., completely-positive trace preserving maps acting over the quantum states of a system~\cite{nielsen_quantum_2010}) on $RB$ that can be generated via the above collision process for a bath of size $n-1$. Each channel in this set is realized through a different choice of unitary operations $\{ U_{RB}^{(k)} \}$, collisions rates $\{ r_k^{-1} \}$, and final time $t$. Given the region initial
state $\omega_R $, its Hamiltonian $ H_R $, and the bath Hamiltonian $ H_B $ defined as in Eq.~\eqref{eq:H_B_copiesofregion},
we define $n_\epsilon $ as the minimum integer such that there exists an element in $\mathfrak{E}_{n_\epsilon+1}$ that
$\epsilon$-thermalizes $R$,
\begin{equation}
n_\epsilon := \min \left\{ n \in \N \, \Bigg| \, \exists \, \chn \in \mathfrak{E}_{n+1}
\, : \,
\omega_R \xrightarrow[\epsilon, H_B]{\chn} \tau_{\beta}(H_R)\right\}  -1.
\end{equation}
The integer $n_\epsilon$ then quantifies the smallest size of a thermal environment required to thermalize region $R$ under
stochastic collisions, and hence provides a natural measure to quantify robustness of an MBL system against thermalization.
\subsection{Upper and lower bounds for thermalization}
Our main results are upper and lower bounds on $n_\epsilon$ that are essentially tight for a wide range of
Hamiltonians $H_R$. These bounds are stated using the (smooth) max-relative entropy, an entropic quantity
that has received considerable attention in recent years in quantum information and communication
theoretical research~\cite{konig2009operational,horodecki2013fundamental,bu2017maximum}. Once again, it is worth noting that our results are applicable in general for finite-dimensional quantum systems where equilibration occurs, of which MBL is a particularly interesting example.
The max-relative entropy~\cite{datta_min-_2009} between two quantum states $\rho, \sigma \in \SH{\mcH}$
such that $\supp(\rho)  \subseteq  \supp(\sigma)$ is defined as
\begin{equation}
\label{eq:max_rel_ent}
\Dmax{}\left( \rho \| \sigma \right) = \inf \left\{ \lambda \in \R \ : \ \rho \leq 2^{\lambda} \sigma \right\},
\end{equation}
while the smooth max-relative entropy between the same two quantum states, for some $\epsilon > 0$, is defined as 
 \begin{equation}
\label{eq:smooth_max_rel_ent}
\Dmax{\epsilon}(\rho \| \sigma) = \inf_{\tilde{\rho} \in B_{\epsilon}(\rho)} \Dmax{}\left( \tilde{\rho} \| \sigma \right),
\end{equation}
where $B_{\epsilon}(\rho)$ is the ball of radius $\epsilon$ around the state $\rho$ with respect to the distance induced
by the trace norm.
\subsubsection{Upper bound}
We first present and discuss the upper bound, which can be easily stated in terms of the quantities just introduced.
\begin{theorem}[Upper bound on the size of the bath] \label{thm:upper}
For a given Hamiltonian $H_R$, inverse temperature $\beta$, and a constant $\epsilon > 0$, we have that
\begin{align}
n_\epsilon \leq \frac{1}{\epsilon^2} \, 2^{\Dmax{}\left( \omega_R \| \tau_{\beta}(H_R) \right)}.
\end{align}
\end{theorem}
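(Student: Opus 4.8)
The plan is to construct an explicit collision channel whose output approximates the target thermal state, using the convex split lemma as the engine. The key observation is that the bath consists of $n-1$ independent copies of the thermal state $\tau_\beta(H_R)$, and we have one system register carrying $\omega_R$. The natural strategy is to design an energy-conserving unitary that effectively symmetrizes or redistributes the ``excess'' of $\omega_R$ over the thermal state across all $n$ registers (system plus bath). If the convex split lemma guarantees that the average of such distributed states is close to $\tau_\beta(H_R)^{\otimes n}$, then tracing down to the system register yields something $\epsilon$-close to $\tau_\beta(H_R)$, while the bath stays approximately thermal — exactly what $\epsilon$-thermalization demands.

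**First I would** recall the precise statement of the convex split lemma. In its standard form, it says that if $\omega_R \leq 2^{k}\tau_\beta(H_R)$ with $k = \Dmax{}(\omega_R \| \tau_\beta(H_R))$, then for $n$ copies the mixture
\begin{equation}
\frac{1}{n}\sum_{i=1}^{n} \tau_\beta(H_R)^{\otimes(i-1)} \otimes \omega_R \otimes \tau_\beta(H_R)^{\otimes(n-i)}
\end{equation}
is within trace distance $\sqrt{2^{k}/n}$ of $\tau_\beta(H_R)^{\otimes n}$. Setting this bound equal to $\epsilon$ and solving for $n$ gives precisely $n = 2^{k}/\epsilon^{2}$, which is the claimed upper bound. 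So the analytic core is just reading off the lemma's convergence rate.

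**The crucial middle step** is to realize the above convex mixture as the output of an admissible channel in $\mathfrak{E}_n$ — that is, via energy-preserving collision unitaries. The mixture over positions $i$ is a classical average, which can be implemented by a randomized permutation: one picks a random index $i$ and swaps the system register into slot $i$. Because all $n$ registers share the same Hamiltonian $H_R$, any permutation of the registers commutes with $H_R + H_B = \sum_i H_R^{(i)}$, so these swaps are genuinely energy-conserving and satisfy Eq.~\eqref{eq:UHcommute}. Such randomized unitary dynamics is exactly what the master equation \eqref{eq:master_eq} generates (a convex combination of unitary conjugations), so the resulting channel lies in $\mathfrak{E}_n$. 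After applying this channel and tracing out all but one register, the reduced state on $R$ is $\frac{1}{n}(\omega_R + (n-1)\tau_\beta(H_R))$, whose distance to $\tau_\beta(H_R)$ is controlled, but more importantly the \emph{global} $SB$ state is the symmetric mixture above, which the convex split lemma places within $\epsilon$ of the fully thermal state.

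**The main obstacle** I anticipate is the bookkeeping to ensure the \emph{global} closeness condition in Eq.~\eqref{eq:def_thermalizing} — not merely closeness of the system marginal — is what the convex split lemma directly certifies. The lemma is tailored to bound exactly the global mixture against $\tau_\beta(H_R)^{\otimes n}$, and since $\tau_\beta(H_S)\otimes\tau_\beta(H_B) = \tau_\beta(H_R)^{\otimes n}$ in this copies-of-$R$ model, the match is clean; the care is in verifying that the support condition $\supp(\omega_R)\subseteq\supp(\tau_\beta(H_R))$ needed to define $\Dmax{}$ holds, which is automatic since $\tau_\beta(H_R)$ has full support. A secondary technical point is confirming that the permutation channel is reachable as a time-$t$ solution of \eqref{eq:master_eq} for appropriate rates $\{\tau_k^{-1}\}$ and collision set $\{U_{RB}^{(k)}\}$; this is a reachability statement about the random-unitary semigroup and should follow from the appendix referenced as \ref{sec:coll_mod_and_rand_unit}.
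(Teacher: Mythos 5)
Your proposal is correct and follows essentially the same route as the paper: apply the random-swap channel $\bar{\chn}_n$ of Eq.~\eqref{eq:equilibration_chn} to $\omega_R \otimes \tau_\beta(H_R)^{\otimes n-1}$, observe that its output is exactly the convex-split state of Eq.~\eqref{eq:equilibrium_state}, invoke the convex split lemma to bound the global trace distance by $\sqrt{2^{\Dmax{}(\omega_R\|\tau_\beta(H_R))}/n}$, and solve for $n$. The points you flag as potential obstacles (global versus marginal closeness, energy conservation of the swaps, and membership of the channel in $\mathfrak{E}_n$ via the Poisson collision process) are exactly the ones the paper addresses, in the same way.
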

The above theorem provides a quantitative bound on the size of the thermal bath needed
to $\epsilon$-thermalize a lattice region $R$, when the coupling is mediated by stochastic collisions.
For this specific dynamics, the region can be $\epsilon$-thermalized if the size of the bath (the number of components)
is proportional to the exponential of the max-relative entropy between the state of the region $\omega_R$ and its
thermal state $\tau_{\beta}(H_R)$.
\par
Theorem~\ref{thm:upper} is proven in Supplementary Note 2. Here, we present a sketch of the proof
in two steps. In the first step, we show that $\mathfrak{E}_n$ can be connected to so-called random
unitary channels~\cite{audenaert_random_2008}. In the second step, we use this connection to find a
particular channel in $\mathfrak{E}_n$ that achieves the upper bound of the above theorem. A central
ingredient to the second step is a result known as convex split lemma~\cite{anshu_quantum_2017,
anshu_quantifying_2018}.
\par
Turning to the first step, recall that a random unitary channel is a map of the form
\begin{equation}
\label{eq:ran_unit_chn}
\chn(\cdot) = \sum_k \, p_k \, U_k \, \cdot \, U_k^{\dagger},
\end{equation}
where $\left\{ p_k \right\}_k$ is a probability distribution, and $\{U_k\}_k$ is a set of unitary operators. For a given number
$n-1$ of bath subsystems, we define the class of energy-preserving random unitary channels $\mathfrak{R}_n$
as those random unitary channels on $RB$ for which each unitary operator $U_k \in \BH{\mcH_R \otimes \mcH_B}$
commutes with the Hamiltonian of the global system, i.~e., $\left[ U_k , H_R + H_B \right] = 0$. In Supplementary Note 1,
we show that for any $n\geq1$, $\mathfrak{E}_n \subseteq \mathfrak{R}_n$, therefore allowing us to analyse any
element of $ \mathfrak{E}_n $ as a random unitary channel.
\par
Turning to the second step, we use a stochastic collision model with a simple representation in terms of random unitary
channels. Let us first recall that the thermal bath $B$ is described by $n-1$ copies of $\tau_{\beta}(H_R)$,
the Gibbs state of the Hamiltonian $H_R$ at inverse temperature $\beta$. The collisions occur either between the region $R$
and one subsystem of the bath, or between two bath subsystems. The rate of collisions is uniform, and given by $r^{-1} >0$.
During a collision involving the $i$-th and $j$-th subsystems of $RB$, the states of the two colliding components are
swapped, so that the interaction is described by the unitary operator $U^{(i,j)}_{\text{swap}}$. The action of this
operator over two quantum systems, described by the state vectors $\ket{\psi}_1$ and $\ket{\phi}_2$ respectively, is given by
$U^{(1,2)}_{\mathrm{swap}} \ket{\psi}_1 \otimes \ket{\phi}_2 =\ket{\phi}_1 \otimes \ket{\psi}_2$. For an initial global
state $\rho_{RB}^{(\text{in})} = \omega_R \otimes \tau_{\beta}(H_R)^{\otimes n-1}$, the steady state obtained through this
process is
\begin{equation}
\label{eq:equilibrium_state}
\rho_{RB}^{(\text{ss})}
=
\sum_{m=1}^n \frac{1}{n} \, \tau_{\beta}(H_R)^{\otimes m-1} \otimes \omega_R \otimes \tau_{\beta}(H_R)^{\otimes n-m},
\end{equation}
where the a-thermality of the region has been uniformly hidden into the different components of the bath. It is worth noting that, under the stochastic collision model described above, the global system reaches its steady state exponentially quickly in the collision rate $r^{-1}$, see Supplementary Note 2 for more details.
\par
The mapping from the initial state of region and bath to the steady state
is achieved by the following random unitary channel
\begin{equation}
\label{eq:equilibration_chn}
\bar{\chn}_n(\cdot) = \sum_{i=1}^n \frac{1}{n} \, U^{(1,i)}_{\text{swap}} \, \cdot \, U^{(1,i) \, \dagger}_{\text{swap}},
\end{equation}
which uniformly swaps each of the bath subsystems with $R$. Such channels have 
been studied before in the context of entropy production~\cite{diosi2006exact,csiszar2007limit}, see also
Ref.~\cite{scarani2002thermalizing} for a similar example. Since all subsystems share the same Hamiltonian $H_R$,
it is easy to see that each one of the $U^{(1,i)}_{\text{swap}}$ commutes with the joint Hamiltonian, so that $\bar{\chn}_n
\in \mathfrak{R}_n$. Finally, we can invoke the convex split lemma, see Supplementary Note 2, which allows us
to show that, for any $ \epsilon >0 $, the channel $\bar{\chn}_n$ can $\epsilon$-thermalize the
region $R$ when the number of subsystems is $n = \epsilon^{-2} 2^{\Dmax{}\left( \omega_R \| \tau_{\beta}(H_R) \right)}$.

The collision model presented here encompasses a wide range of physically realistic thermalization processes. However, before turning to a lower bound, we note that, due to the particularly simple nature of \eqref{eq:equilibration_chn}, one can use the above construction to upper bound the required size of a bath  for other thermalization models as well. For example, by noting that the channel~\eqref{eq:equilibration_chn} is permutation-symmetric (in the sense that it has permutation-invariant states as its fixed points), it follows that thermalization models with permutation-symmetric dynamics (i.e.~those allowing for any permutation symmetric channel) are also subject to the above upper bound.
\subsubsection{Lower bound and optimality results}
We now turn to deriving a lower bound on $n_\epsilon$. This bound is obtained through a further assumption on the
Hamiltonian $ H_R $, which we call the energy subspace condition (ESC).
\begin{definition}[Energy subspace condition]
Given a Hamiltonian $ H_R $, we say that it fulfills the ESC iff for any $n \in \N$,
given the set of energy levels $\left\{E_k \right\}_{k=1}^d$ of the Hamiltonian
$H_R$, we have that for any vectors $m, m' \in \N^d $ with the same normalization factor, namely
$\sum_k m_k = \sum_k m'_k = n $,
\begin{equation}
\label{item:opt_cond}
\sum_k m_k E_k \neq \sum_k m'_k E_k .
\end{equation}
\end{definition}

Let us here briefly discuss the physical significance of the ESC condition. The ESC entails (but is not equivalent to) that energy levels cannot be exact integer multiples of one another, which also implies full non-degeneracy. Furthermore, note that the ESC is not approximate, in other words, it still holds even if energy levels are very close to each other. 
	Having exact integer multiple energy levels is a very fine-tuned situation that breaks as soon as randomness is introduced in the Hamiltonian \cite{Linden2009}. Let us for example consider how likely it is for MBL systems to have degenerate energy levels. In the ergodic phase, the level statistics are Wigner-Dyson, therefore non-degeneracy is enforced by level repulsion. On the other hand, in the strong MBL phase, level statistics are Poissonian, meaning that the probability density function is maximum for zero-energy gaps. Despite so, the probability of exact degeneracy would correspond to a zero-volume integral of the Poissonian distribution (which is bounded from above), and therefore still amounts to zero probability of having degenerate gaps.

It is clear, however, that the ESC is much more stringent than requiring non-degeneracy; If we require Eq.~\eqref{item:opt_cond} to be satisfied for all $n \in \N$, this implies that the energy levels of the Hamiltonian	$H_R$ need to be irrational. Nevertheless, one can relax this condition by asking it to hold for all $n \leq N$, for some sufficiently large $N \in \N$, for example with the upper bound on $ n_\epsilon $ in Theorem \ref{thm:upper}. We refer to this as the ESC being satisfied up to $ N $. In Section \ref{sec:disorderedHeisenberg}, we discuss how a paradigmatic MBL system relates to this condition.

We can now state the following theorem, proved in Supplementary Note 4, on the optimality
of the channel associated with the convex split lemma.

\begin{theorem}[Optimal thermalization processes]
\label{thm:csl_optimality}
If $H_R$ satisfies the ESC and the state $\omega_R$ is diagonal in the energy eigenbasis, then the channel $\bar{\chn}_n$
in Eq.~\eqref{eq:equilibration_chn} provides the optimal thermalization process, that is, for any $n \in \N$,
\begin{equation}
\bar{\chn}_n \in \underset{\chn \in \mathfrak{E}_n}{\argmin}
\norm{ \chn \left( \omega_R \otimes \tau_{\beta}(H_R)^{\otimes n-1} \right) - \tau_{\beta}(H_R)^{\otimes n} }_1.
\end{equation}
\end{theorem}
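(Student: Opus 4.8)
The plan is to prove the stronger statement that $\bar{\chn}_n$ minimizes the trace distance over the larger class $\mathfrak{R}_n \supseteq \mathfrak{E}_n$ of energy-preserving random unitary channels; since $\bar{\chn}_n \in \mathfrak{E}_n$, optimality over $\mathfrak{R}_n$ immediately implies optimality over $\mathfrak{E}_n$. Writing $\tau := \tau_\beta(H_R)$, $\rho^{\mathrm{in}} := \omega_R \otimes \tau^{\otimes n-1}$, and $H_{\mathrm{tot}} := \sum_{i=1}^n H_R^{(i)} = H_R + H_B$, the first step is to use the ESC to pin down the eigenspace structure of $H_{\mathrm{tot}}$. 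Diagonalizing the non-degenerate $H_R$ with levels $\{E_k\}_{k=1}^d$, the eigenvalues of $H_{\mathrm{tot}}$ are $\sum_k m_k E_k$, indexed by occupation vectors $m \in \N^d$ with $\sum_k m_k = n$. The ESC guarantees these are pairwise distinct, so each eigenspace $\mcH_m$ is exactly the span of the product energy eigenstates with occupation pattern $m$, of dimension $g_m = \binom{n}{m_1,\dots,m_d}$. Consequently $[U, H_{\mathrm{tot}}]=0$ holds iff $U=\bigoplus_m U_m$ with each $U_m$ an \emph{arbitrary} unitary on $\mcH_m$ (since $H_{\mathrm{tot}}$ is a multiple of the identity on each $\mcH_m$), and so every $\chn \in \mathfrak{R}_n$ acts independently within each sector and preserves the block-diagonal structure.

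Next I would record two structural facts. First, $\tau^{\otimes n}$ is block diagonal and proportional to the projector within each sector, $P_m\,\tau^{\otimes n} P_m = w_m\,P_m$ with $w_m = \prod_k (e^{-\beta E_k}/Z)^{m_k}$ and $Z$ the partition function, because all product states sharing occupation $m$ carry the same Gibbs weight. Second, $\rho^{\mathrm{in}}$ is itself block diagonal with respect to $H_{\mathrm{tot}}$: the equilibrium state $\omega_R$ commutes with $H_R$ (coherences between distinct energy sectors are boundary-induced and are neglected consistently with the boundary terms dropped in defining $H_R$; alternatively, one dephases using the pinching map onto the $\mcH_m$, which commutes with every $\chn \in \mathfrak{R}_n$ and fixes $\tau^{\otimes n}$, so by data processing the trace distance can only decrease). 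With both $\rho^{\mathrm{in}}$ and $\tau^{\otimes n}$ block diagonal and every $\chn \in \mathfrak{R}_n$ block-structure preserving, the objective decomposes as $\norm{\chn(\rho^{\mathrm{in}}) - \tau^{\otimes n}}_1 = \sum_m \norm{P_m \chn(\rho^{\mathrm{in}}) P_m - w_m P_m}_1$.

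The heart of the argument is then a per-sector optimization. Setting $\sigma_m := P_m\,\rho^{\mathrm{in}}\,P_m$, the block output $P_m\chn(\rho^{\mathrm{in}})P_m = \sum_k p_k\, U_{k,m}\,\sigma_m\, U_{k,m}^\dagger$ is a convex combination of unitary conjugates of $\sigma_m$, hence majorized by $\sigma_m$. Since the target is proportional to the identity, $\norm{P_m\chn(\rho^{\mathrm{in}})P_m - w_m P_m}_1 = \sum_j |\lambda_j - w_m|$ is a Schur-convex function of the output eigenvalues $\{\lambda_j\}$. As the uniform vector is majorization-minimal, this is bounded below by its value at the sector-maximally-mixed state $\tfrac{\tr(\sigma_m)}{g_m}P_m$, namely $|\tr(\sigma_m) - w_m g_m|$; summing over $m$ gives a lower bound valid for every $\chn \in \mathfrak{R}_n$. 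Finally I would verify that $\bar{\chn}_n$ saturates it: its output $\rho_{RB}^{(\mathrm{ss})}$ from Eq.~\eqref{eq:equilibrium_state} is invariant under the full symmetric group $S_n$ and block diagonal, and since $S_n$ acts transitively on the product basis spanning each sector $\mcH_m$, such a state must be constant on each sector, i.e.\ exactly $\tfrac{\tr(\sigma_m)}{g_m}P_m$. Hence $\bar{\chn}_n$ attains every per-sector minimum simultaneously and is optimal.

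The main obstacle I anticipate lies at the interface between the two halves of the argument. The first delicate point is confirming that the simple swap dynamics of $\bar{\chn}_n$ really produces the sector-wise maximally mixed state: it is the transitivity of the $S_n$-action on each sector that makes permutation symmetry collapse to uniformity, and this is precisely where non-degeneracy of $H_R$ (guaranteed by the ESC) is essential, degeneracy splitting each sector into finer orbits on which a symmetric state need not be uniform. The second is justifying the reduction to block-diagonal inputs, which is where the physical assumption that the boundary coherences of $\omega_R$ are negligible must enter: without it $\bar{\chn}_n$ leaves residual inter-sector coherences that strictly increase the distance, so care is needed to argue both that the idealized, consistently boundary-free model is the right object and that the Schur-convexity lower bound continues to hold for all channels regardless.
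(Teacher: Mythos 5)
Your proof is correct and follows the same overall strategy as the paper's (Sec.~III of the Supp.~Mat.): embed $\mathfrak{E}_n$ into the energy-preserving random unitary channels $\mathfrak{R}_n$, use the ESC to identify the energy sectors of $H_R+H_B$ with the occupation classes of product eigenstates, observe that the sector weights are invariant, show that among fixed-weight states the one that is uniform on each sector is closest to $\tau_\beta(H_R)^{\otimes n}$, and verify that $\bar{\chn}_n$ produces exactly that state. Where you differ is in the two key sub-steps. For optimality of the uniform state, the paper uses a data-processing argument with the instrument $\mc{E}(\cdot)=\sum_i \tr[\,\cdot\,\Pi_i]\sigma_i$ (their Lemma on fixed-weight subspaces), which has the advantage of not requiring the competing output to be block diagonal and of working for targets that are merely block-supported rather than proportional to projectors; your per-sector Schur-convexity bound is more elementary and makes the ``uniform is majorization-minimal'' intuition explicit, but leans on the decomposition of the trace norm into blocks and hence on block-diagonality of the input. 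For achievability, the paper computes the matrix elements $\bra{\mathbf{i}}\rho^{(n)}\ket{\mathbf{i}}$ directly and checks they depend only on the occupation tuple, whereas you argue via transitivity of the $S_n$-action on each sector; note that permutation invariance alone does not force proportionality to the sector projector (the commutant of the orbit representation is larger than the scalars---already for $n=2$ a symmetric state on $\mathrm{span}\{\ket{12},\ket{21}\}$ need not be uniform), so you must additionally use that the output is diagonal in the product eigenbasis, which holds precisely because $\omega_R$ and $\tau_\beta(H_R)$ are diagonal. That diagonality hypothesis is stated explicitly in the paper's Prop.~III (though not in the main-text theorem), and you correctly flagged it as the load-bearing physical assumption; your pinching remark recovers the lower bound for non-diagonal $\omega_R$ but, as you note, not the achievability by $\bar{\chn}_n$.
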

Theorem~\ref{thm:csl_optimality} shows that, for Hamiltonians satisfying the ESC, the channel $\bar{\chn}_n$
provides the optimal thermalization of $R$, that is, no other random energy-preserving channel acting
on the same global system can achieve a smaller value of $\epsilon$ in Eq.~\eqref{eq:def_thermalizing}.
The above result applies to initial states that are diagonal in the energy eigenbasis; this is in general not the case for the reduced state $\omega_R$ of the infinite-time average of Eq.~\eqref{eq:inf_time_average}, since it might have coherence in the eigenbasis of the reduced Hamiltonian $H_R$. For states with coherence, the channel of Eq.~\eqref{eq:equilibration_chn} is not necessarily optimal anymore, but we can still bound the difference in thermalization achieved by this channel and an optimal one, see Supplementary Note 4 for the proof.
\begin{theorem}[Thermalization bound for coherent states]
\label{thm:csl_semi-optimality}
Fix $n \in \N$, and assume that $H_R$ satisfies the ESC. Consider the channel $\chn_{\mathrm{opt}} \in \mathfrak{E}_n$ achieving optimal thermalization $\epsilon_{\mathrm{opt}} = \norm{ \chn_{\mathrm{opt}} \left( \omega_R \otimes \tau_{\beta}(H_R)^{\otimes n-1} \right) - \tau_{\beta}(H_R)^{\otimes n} }_1$, and the decohering channel $\Delta( \cdot ) = \sum_E \Pi_E \cdot \Pi_E$, where $\Pi_E$ is the eigenprojector onto the energy subspace associated with $E$. We define the parameter $\delta = \norm{ \omega_R - \Delta(\omega_R) }_1$, quantifying the amount of coherence contained in the state of the region. Then, the thermalization achieved by the channel $\bar{\chn}_n$ is bounded as
\begin{equation}
\norm{ \bar{\chn}_n \left( \omega_R \otimes \tau_{\beta}(H_R)^{\otimes n-1} \right) - \tau_{\beta}(H_R)^{\otimes n} }_1
\leq
\epsilon_{\mathrm{opt}}
+
\delta.
\end{equation}
\end{theorem}
The above theorem provides a quantitative bound on the thermalization achieved by the channel $\bar{\chn}_n$ when the input system has coherence in the energy eigenbasis. In the case of MBL systems, the eigenstates of the Hamiltonian are close to product states, see for instance Ref~\cite{friesdorf_many-body_2015}, and therefore the reduced state of the infinite-time average $\omega_R$ is expected to have small and strongly-decaying coherence. Thus, Thm.~\ref{thm:csl_semi-optimality} shows that the stochastic collision model introduced in Section \ref{sec:setting} is able to effectively thermalize the region of an MBL systems.
From the two theorems stated above we derive the following corollary, providing a lower bound on the size of the thermal bath needed to thermalize a given quantum system.
\begin{corollary}[Lower bound to size of the bath]
For a given $\beta$ and $\epsilon > 0$, and some Hamiltonian $H_R$ satisfying the ESC, we have
\begin{equation}
\label{eq:lower_bound}
n_\epsilon \geq 2^{\Dmax{2 \sqrt{\epsilon} +\delta}\left( \omega_R \| \tau_{\beta}(H_R) \right)},
\end{equation}
where $\delta = \norm{\omega_R - \Delta(\omega_R)}_1$ and $\Delta(\omega_R)$ is the decohered version
of the state $\omega_R$.
\end{corollary}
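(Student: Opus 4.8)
The plan is to turn the optimality statement of Theorem~\ref{thm:csl_optimality} into a bound on the smooth max-relative entropy, effectively reverse-engineering the convex split lemma. Write $\tau := \tau_\beta(H_R)$ and recall that the output of the optimal channel $\bar{\chn}_n$ on $\omega_R \otimes \tau^{\otimes n-1}$ is exactly the steady state $\rho_{RB}^{(\text{ss})}$ of Eq.~\eqref{eq:equilibrium_state}. Since Theorem~\ref{thm:csl_optimality} guarantees that $\bar{\chn}_n$ minimises the trace-norm distance to $\tau^{\otimes n}$ over all of $\mathfrak{E}_n$, the integer $n_\epsilon$ is precisely the least $n$ for which $\norm{\rho_{RB}^{(\text{ss})} - \tau^{\otimes n}}_1 \le \epsilon$. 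Fixing $n = n_\epsilon$, it then suffices, by the definition~\eqref{eq:smooth_max_rel_ent} of the smooth max-relative entropy, to exhibit a single state $\tilde\omega$ with $\norm{\tilde\omega - \omega_R}_1 \le 2\sqrt\epsilon$ and $\tilde\omega \le n\,\tau$: the latter gives $\Dmax{}(\tilde\omega\|\tau) \le \log_2 n$ through \eqref{eq:max_rel_ent}, hence $\Dmax{2\sqrt\epsilon}(\omega_R\|\tau) \le \log_2 n$, which is the claim \eqref{eq:lower_bound}.

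Second, I would construct $\tilde\omega$ by spectral truncation of the operator $M := \tau^{-1/2}\,\omega_R\,\tau^{-1/2}$. Let $\Pi_{>}$ be its spectral projector onto eigenvalues exceeding $n$, and set $\Pi_{\le} := \id - \Pi_{>}$. Define the subnormalised operator $\hat\omega := \tau^{1/2}\,\Pi_{\le} M\, \Pi_{\le}\,\tau^{1/2}$. By construction $\Pi_{\le} M \Pi_{\le} \le n\,\id$, so $\hat\omega \le n\,\tau$, and the discarded weight is $p := \tr(\omega_R) - \tr(\hat\omega) = \tr(\Pi_{>} M\,\tau)$. Normalising, $\tilde\omega := \hat\omega/(1-p)$ obeys $\tilde\omega \le \tfrac{n}{1-p}\,\tau$ and $\norm{\tilde\omega - \omega_R}_1 \le 2p$; the factor $(1-p)^{-1}$ tends to $1$ and is absorbed into the final bookkeeping, so the whole problem reduces to the single estimate $p \lesssim \sqrt\epsilon$.

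Third, I would bound $p$ by feeding the global closeness $\norm{\rho_{RB}^{(\text{ss})} - \tau^{\otimes n}}_1 \le \epsilon$ through a coarse binary measurement and data processing. Take the two-outcome measurement that flags whether at least one of the $n$ subsystems lands in the high-ratio sector $\Pi_{>}$, i.e.\ with ``safe'' effect $\Pi_{\le}^{\otimes n}$. A direct computation using the explicit form \eqref{eq:equilibrium_state} yields $\tr(\Pi_{\le}^{\otimes n}\,\tau^{\otimes n}) = (1-r)^n$ and $\tr(\Pi_{\le}^{\otimes n}\,\rho_{RB}^{(\text{ss})}) = (1-r)^{n-1}(1-p')$, where $r := \tr(\Pi_{>}\tau)$ and $p' := \tr(\Pi_{>}\,\omega_R)$. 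Monotonicity of the trace distance under this measurement then forces $(1-r)^{n-1}\,|p'-r| \le \epsilon/2$. Since $\Pi_{>}$ projects onto eigenvalues of $M$ above $n$, one has $p \ge n r$ and $p \le \tr(M\tau) = 1$, whence $r \le 1/n$ and $(1-r)^{n-1} \ge 1/e$; in the commuting case $[\omega_R,\tau]=0$ one moreover has $p' = p$, and these estimates combine to give $p = O(\epsilon)$, which is stronger than required.

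Finally, the main obstacle is precisely the non-commutativity of $\omega_R$ and $\tau$: the weight $p'$ that the measurement detects is the projector $\Pi_{>}$ sandwiched directly around $\omega_R$, whereas the truncation weight $p$ controlling $\norm{\tilde\omega - \omega_R}_1$ is $\Pi_{>}$ weighted by $M$ against $\tau$, and the two coincide only when $\omega_R$ and $\tau$ share an eigenbasis. I expect to close this gap with an operator Cauchy--Schwarz (or fidelity) inequality relating $p$ and $p'$ through the common projector $\Pi_{>}$; such a step trades the linear estimate $p = O(\epsilon)$ for a square-root one, $p \lesssim \sqrt\epsilon$, which is exactly what produces the smoothing radius $2\sqrt\epsilon$ appearing in \eqref{eq:lower_bound}. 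The degenerate cases $n=1$ and $p \to 1$ are treated separately, as there the bound $n_\epsilon \ge 2^{\Dmax{2\sqrt\epsilon}(\omega_R\|\tau)}$ is either vacuous or immediate.
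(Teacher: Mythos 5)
Your opening reduction --- invoking Theorem~\ref{thm:csl_optimality} to replace the unknown optimal channel by $\bar{\chn}_{n_\epsilon}$, so that the hypothesis becomes $\norm{\rho_{RB}^{(\text{ss})}-\tau^{\otimes n_\epsilon}}_1\le\epsilon$, and then aiming to exhibit a state $\tilde\omega$ with $\norm{\tilde\omega-\omega_R}_1\le 2\sqrt\epsilon$ and $\tilde\omega\le n_\epsilon\,\tau$ --- is exactly the right strategy, and the first half of it coincides with the paper's proof. The genuine gap is the step you yourself label ``the main obstacle'': it is the crux of the argument and is left as a hope rather than a proof. Your binary measurement controls $p'=\tr(\Pi_>\omega_R)$ and $r=\tr(\Pi_>\tau)$, whereas the quantity governing $\norm{\hat\omega-\omega_R}_1$ is $p=\tr(\Pi_>M\Pi_>\tau)$ with $M=\tau^{-1/2}\omega_R\tau^{-1/2}$. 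Writing these as Hilbert--Schmidt norms, $p=\nnorm{M^{1/2}\Pi_>\tau^{1/2}}_2^2$ while $p'=\nnorm{M^{1/2}\tau^{1/2}\Pi_>}_2^2$; they differ by the order of $\Pi_>$ and $\tau^{1/2}$, and the natural Cauchy--Schwarz applications yield only trivialities ($p\le\sqrt{p}$) or bounds involving $\iNorm{M}=2^{\Dmax{}(\omega_R\|\tau)}$, which is the very quantity being bounded. Until that inequality is supplied, the proposal is not a proof in the general (non-commuting) case. Two mitigating remarks: (i) the hypotheses under which Theorem~\ref{thm:csl_optimality} is actually established (Prop.~\ref{prop:csl_optimality}) require $\omega_R$ and $\tau_\beta(H_R)$ to be diagonal in the same eigenbasis, in which case $p=p'$ and your own chain $p\le r+e\epsilon/2\le p/n+e\epsilon/2$ does close the argument for $n\ge 2$ and small $\epsilon$; (ii) even then, the $(1-p)^{-1}$ renormalisation gives $\tilde\omega\le \tfrac{n}{1-p}\tau$ and hence only $n_\epsilon\ge(1-p)\,2^{\Dmax{2\sqrt\epsilon}(\omega_R\|\tau_\beta(H_R))}$, a slightly weaker constant than claimed.

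For comparison, the paper avoids the non-commutative smoothing problem altogether. It dilates $\bar{\chn}_{n_\epsilon}$ with a classical control register $A$ recording which swap fired, so the output $\tilde\rho_{RBA}$ is classical-quantum; Lemma~\ref{lem:cq_trace_dist} (a fidelity-preservation fact for cq extensions plus Fuchs--van de Graaf, which is where the $2\sqrt\epsilon$ originates) produces a cq extension $\tau_{RBA}$ of the target within trace distance $2\sqrt\epsilon$; the exact operator inequality $\tau_{RBA}\le\tau_{RB}\otimes\Pi_A$ of Lemma~\ref{lem:op_ineq_cq} then gives $\Dmax{}(\tau_{RBA}\|\tau_{RB}\otimes\rho_A)\le\log n_\epsilon$ with no spectral truncation at all, and monotonicity of $\Dmax{}$ under the inverse dilation and partial trace delivers the claim. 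If you want to complete your route in full generality, that classical-quantum mechanism is the ingredient to import; as written, your spectral-truncation argument establishes the corollary only in the commuting case.
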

Note that this lower bound is arising from the stringent 
model of thermalization of Eq.~(\ref{eq:def_thermalizing}), and that less stringent models will potentially lead
to smaller lower bounds. 
When $H_R$ does not satisfy the ESC, it is easy to find counter-examples to the optimality of the channel
$\bar{\chn}$, as we show in Supplementary Note 4. The idea is that this channel is optimal
only when it is able to produce a uniform distribution within each energy subspace of the global system,
and this is possible if each subspace is fully characterized by a different frequency of single-system eigenvectors,
which is exactly given by the ESC. Indeed, when the ESC is maximally violated,~i.~e., when the system Hamiltonian is completely degenerate, then no bath is required at all. See Sec.~III for a discussion of this and its relation to known bounds from randomness extraction.

\subsection{The disordered Heisenberg chain}
\label{sec:disorderedHeisenberg}
Our results from the previous section show that, for systems that satisfy the ESC, the max-relative entropy between the
local state of a lattice region and its thermal state provides a natural measure for the robustness of that region to thermalization
for a broad family of interactions. This includes many-body systems close to the transition between the ergodic and MBL
phase, where both level repulsion and randomness effects favour a lack of exact degeneracies, so that it seems reasonable
to expect for such systems to satisfy the ESC to sufficient order $n$. In this section, we use these results to study the robustness
of the MBL phase to the thermal noise for a concrete system. Specifically, we consider the disordered Heisenberg chain, a one-dimensional
spin-$\frac{1}{2}$ lattice system composed of $L$ sites, governed by the Hamiltonian 
\begin{equation}
\label{eq:XXZChainH}
H_V =
\sum_i^L \left( \sigma_i^x \sigma_{i+1}^x + \sigma_i^y \sigma_{i+1}^y+ \sigma_i^z \sigma_{i+1}^z \right)
+
\Delta \sum_i^L  h_i \sigma_i^z,
\end{equation}
where $\sigma^x, \sigma^y, \sigma^z \in \BH{\CC^2}$ are the Pauli operators, $\Delta$ is the (dimensionless) disorder
strength, and each parameter $h_i \in [-1,1]$ is drawn uniformly at random. We employ periodic boundary conditions.
\par
It has been demonstrated both
theoretically~\cite{BAA} and experimentally~\cite{Schreiber} that this system undergoes a localization transition
above the critical disorder strength $\Delta_c \approx 7$. The transition manifests itself in a breakdown of
conductance~\cite{BAA,Schreiber}, and a slowdown of entanglement growth after a quench~\cite{Prosen,Pollmann}.
Moreover, a phenomenological model in terms of quasi-local constants of motions exists which provides an explanation
for the non-thermal behavior of the system~\cite{Serbyn,Huse}.

To relate the model to our theoretical results, we note that when the region $ R $ is a single qubit ($|R|=1$) with a non-zero energy gap, then the ESC is always satisfied for all $ n\in \N $. For  $ |R|=2 $, we have verified the ESC condition across a large range of disorder strengths $ \Delta \in [0.01,20] $, up to $N=25$ (out of $\sim 2000$ realizations, all of them satisfy the ESC); we have additionally considered a small number of realizations for $N=35$, for all of which the ESC holds. As $ |R| $ increases, higher values of $ N $ become significantly harder to check numerically. However, we have verified that for example $|R|=4$ always satisfies the ESC up to $N=5$ (for 2000 generated realizations), while the condition is satisfied with high probability when $N=6$ ($90\%$ of the realizations).
\begin{figure}[t]
\center
\includegraphics[width=0.9\columnwidth]{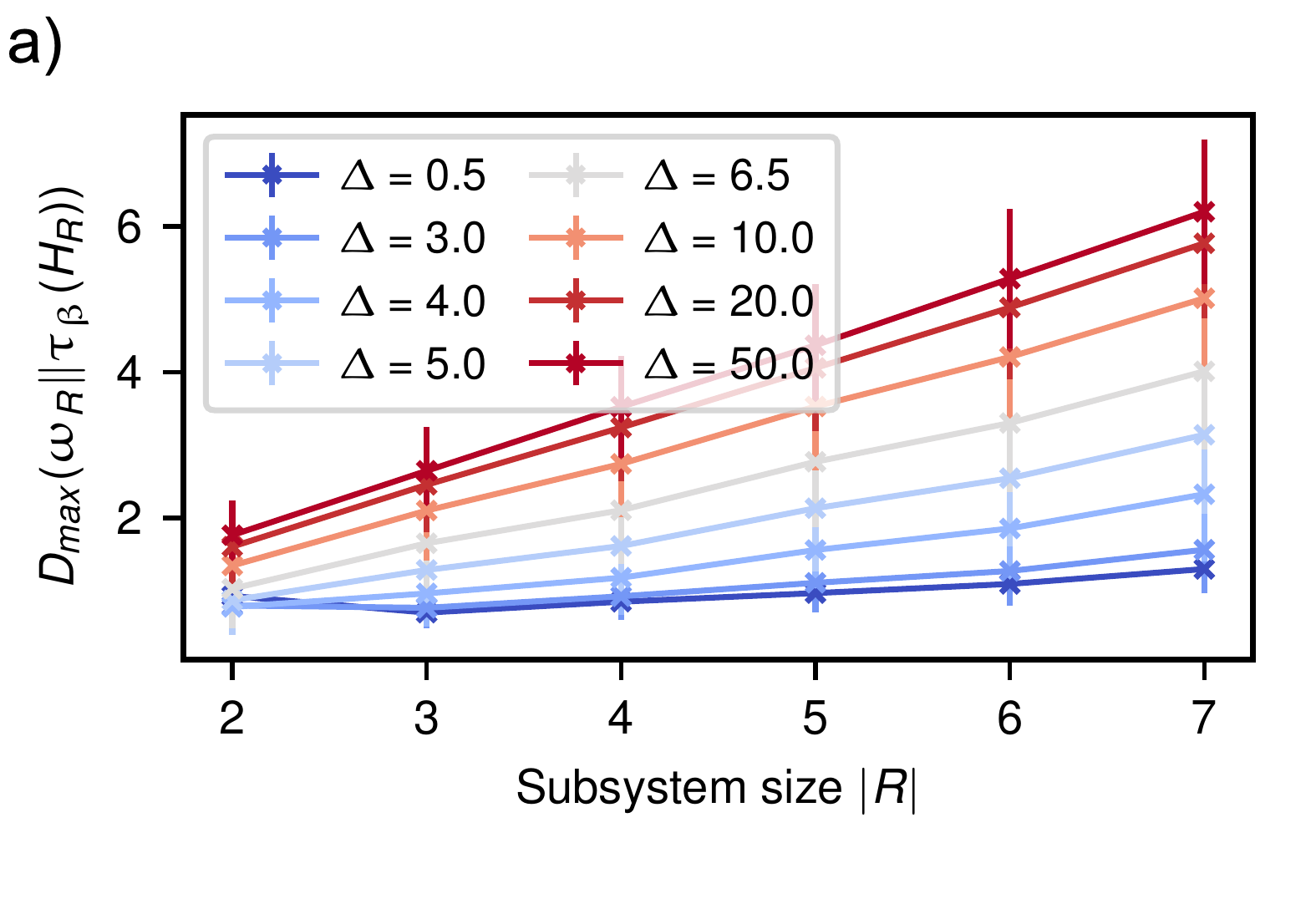}
\includegraphics[width=0.9\columnwidth]{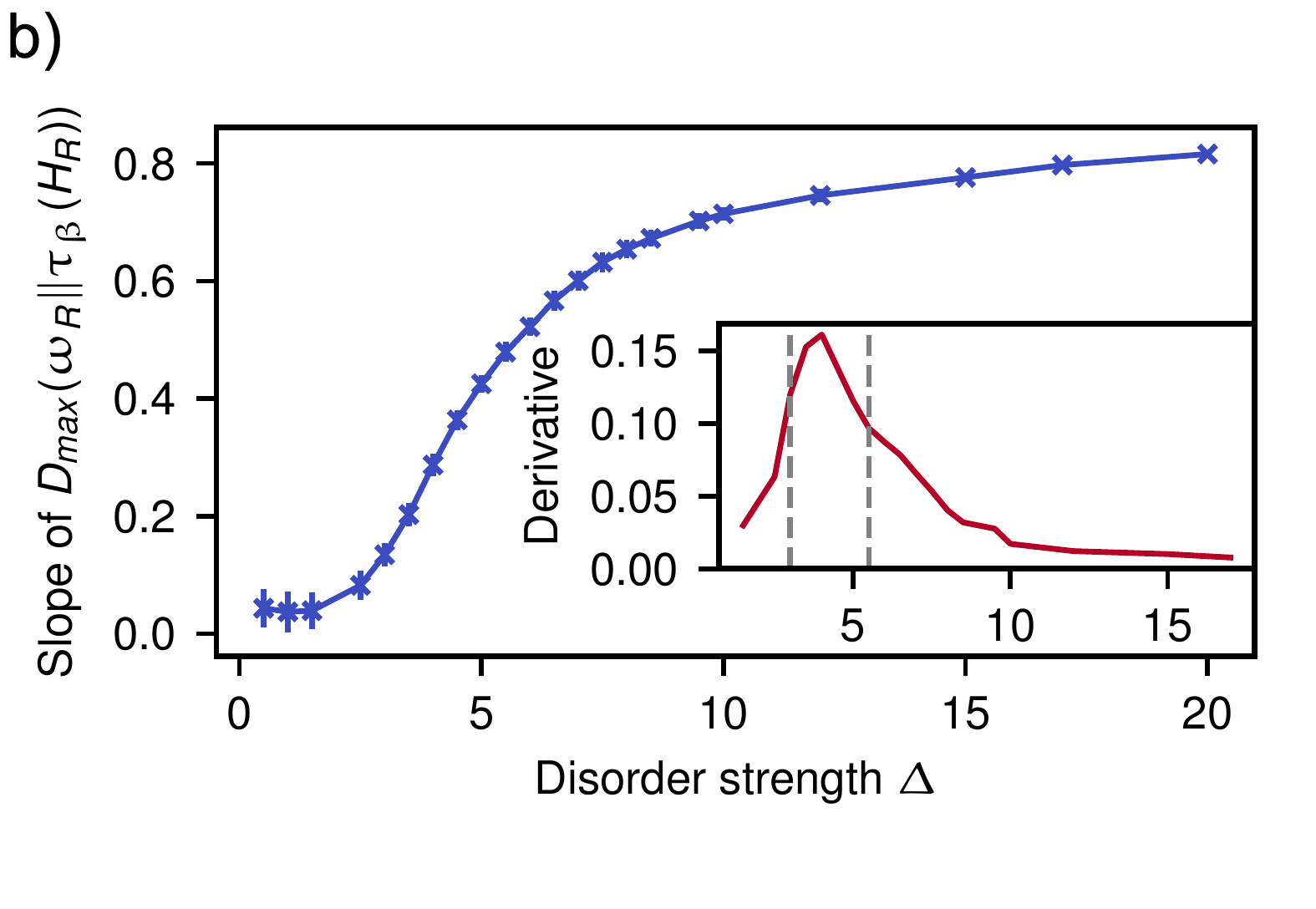}
\caption{{\bf Max-relative entropy for the disordered Heisenberg chain.} {\bf a)} Max-relative entropy $\Dmax{} \left( \omega_R \| \tau_{\beta}(H_R) \right)$ as a function
of subsystem size $|R|$ for a lattice of $L=15$ sites. The plots show an average over $100$ disorder
realizations. The states were calculated employing exact diagonalization. For low values of disorder
$\Delta$ the max-relative entropy is almost constant as $|R|$ increases, while for higher values of
$\Delta$ it scales linearly in $|R|$, hinting toward a robustness of the MBL phase with respect to the
class of interaction models we are considering. {\bf b)} The slope of the max-relative entropy as
a function of the disorder $\Delta$ provides information on the phase transition. Indeed, we can see that this
quantity abruptly increases in proximity of the expected phase transition from the ergodic to the MBL
phase. The slope is obtained by a linear fit with error bars indicating least squares errors. The inset
shows the derivative of the slope, with the grey lines indicate a possible transition region.
}
\label{fig:Dmax}
\end{figure}
\par
In our simulation, we choose as initial state vector $\ket{\Psi(0)}$ a variation of the N{\'e}el state
with support on the total-magnetization sectors $M=\pm 1, 0$. Our choice is motivated by the fact that
this state, due to its increased overlap over different symmetric subspaces of the Hamiltonian, thermalizes
more easily during the ergodic phase. For each random realization, we numerically compute the infinite
time average of $\ket{\Psi}$ as defined in Eq.~\eqref{eq:inf_time_average}, using exact diagonalization.
We then trace out part of the lattice so as to obtain the state $\omega_R$, describing 
the infinite time averaged state reduced to the region $R$.
Notice that in the ergodic phase, when the disorder strength 
$\Delta < \Delta_c$, this state is expected to be close to thermal, with a temperature
depending on the energy of the initial state of the lattice. 
However, when the disorder strength $\Delta$ passes its
critical value, the state $\omega_R$ is not thermal any more \cite{Schreiber}.
\par
To numerically compute the max-relative entropy for this system, we use the Gibbs state
of the reduced Hamiltonian $H_R$, obtained from the Hamiltonian in Eq.~\eqref{eq:XXZChainH}
by only considering terms with full support on the region $R$. The inverse temperature $\beta$ is obtained
by constructing the global Gibbs state of the lattice and requiring its energy to equal to the one
of the initial state vector $\ket{\Psi(0)}$. We compute $\Dmax{}(\omega_R\| \tau_{\beta}(H_R))$ for
different disorder strengths $\Delta$, and different sizes of the region $R$, see Fig~\ref{fig:Dmax}.
\par
We find that in the ergodic phase the state is approximately thermal, and the max-relative
entropy remains almost constant as $|R|$ increases. For big enough sizes of the region, the max-relative
entropy starts increasing even in the ergodic case. However, this effect is due to the finite size of the lattice in our simulation,
and it can be mitigated by increasing the number of lattice sites (at the expenses of a higher computational cost).
As $\Delta$ approaches the critical value, we find that the max-relative entropy scales linearly in the region
size $|R|$, with a linear coefficient which increases with the disorder strength, see Fig.~\ref{fig:Dmax}.a.
As a result, the size of the external thermal bath $n_{\epsilon}$ scales exponentially in the region size, due to the
bounds we have obtained in the previous section. This exponential scaling in the size of the bath
suggests a robustness of the MBL phase with respect to the dynamics given by Eq.~\eqref{eq:master_eq}, since the
relative size of the bath ${n_{\epsilon}}/{|R|}$ needs to diverge as $|R|$ tends to infinity. In other words, for the
MBL phase to be destroyed one needs, under the interaction models we consider, an exponentially vast amount of
thermal noise. It is worth noting that our characterization of robustness of the MBL phase to thermal noise is
distinctly different from others found in the literature~\cite{fischer_dynamics_2016,nandkishore_spectral_2014,
johri_many-body_2015,levi_robustness_2016}. Indeed, we couple the system with a finite-sized thermal bath,
and quantify the robustness in terms of its size. Furthermore, our notion of thermalization accounts for the
evolution of both system and bath, rather than focusing on the system only. Other works instead
consider infinite thermal reservoirs and quantify the robustness as a function, for instance, of the coupling
between system and environment. A promising realization are recent optical lattice experiments
\cite{bordia2016coupling,bordia2017probing,abadal2018bath}. However, to connect to our findings one would need
full state tomography on both system and bath which so far is out of reach for these platforms.
\par
We additionally study the first derivative of the max-relative entropy with respect to the region size $|R|$, as
a function of disorder strength, shown in Fig.~\ref{fig:Dmax}.b. 
We find that, during the ergodic phase, the derivative remains constant
and small. As $\Delta$ approaches the critical value, the derivative increases, and for $\Delta \gg \Delta_c$
the derivative becomes constant again. Thus, we find that
the derivative of the max-relative entropy with respect to the region size is an order parameter for the MBL
phase transition. We then use this order parameter to estimate the critical value $\Delta^{(L)}_c$ for the
finite-length spin chain we are considering, obtaining a value of approximatively $4.5$ for $L = 15$ sites.
While the critical value for infinite-length spin chains is considered to be $\Delta_c \approx 7$, we find that
our value, which we stress is obtained for a finite number of sites, seems to be in good accord with known
results found in the literature using other measures~\cite{luitz_many-body_2015,gray_many-body_2018}.
\section{Discussion}
We show that mathematical results originally developed to study quantum information processing may find
their applications in many body physics, in particular for the study of MBL in this paper. We demonstrate
this by applying the recently developed convex split lemma technique, to derive upper and lower bounds for
the size of the external thermal bath required to thermalize an MBL system. The class of interaction models
between lattice and thermal bath is described by the master equation~\eqref{eq:master_eq}, and 
consists in stochastic energy-preserving collisions between the system and bath components. The
bounds we obtain depend on the max-relative entropy between the state we aim to thermalize and its thermal
state.
\par
We make use of these analytic results to study a specific and at the same time much ubiquitous 
system exhibiting MBL features, known 
as the disordered Heisenberg chain. We show
that the MBL phase in this system is in fact robust with respect to the thermalization processes considered 
here, and that
the derivative of the max-relative entropy with respect to region size serves as an approximate order parameter of the
ergodic to MBL transition. We emphasize
that this is not in contradiction with previous results, where a breakdown of localization was 
reported~\cite{fischer_dynamics_2016,
nandkishore_spectral_2014,levi_robustness_2016}, as the size of the baths considered in these works was
unbounded. Resource-theoretic frameworks offer another potentially useful approach for studying thermalization
with infinite-dimensional baths; the framework of elementary thermal operations~\cite{lostaglio2018elementary}
which involves a single bosonic bath that is coupled only with two levels of the system of interest. One may
then study the resources (the number of bosonic baths with different frequencies) required to achieve
thermalization. Also, and more technically, it would be interesting to study the extent to which both the ESC condition and the requirement of exact commutation in our framework can be relaxed to only approximately hold true and how this in turn affects the lower bound of Corollary~\ref{eq:lower_bound}. These questions we leave to be studied in future work.
\par
The success of our application implies that, potentially, other information theoretic tools could be
employed to study the thermalization of MBL systems -- and non-equilibrium dynamics of many-body systems
in more generality, for that matter. For instance, results in randomness
extraction~\cite{trevisan_extracting_2000} might be useful to provide new bounds. In randomness extraction,
a weakly random source is converted into an approximately uniform distribution, with the use of a seed (a small,
uniformly distributed auxiliary system). In analogy, thermalization requires a non-thermal state to be mapped into an almost
thermal state, with the help of an external bath (the seed). Thus, it seems possible that results from randomness
extraction might be modified to study this setting, and to obtain bounds on the thermal seed.
\par
It has been shown that excited states of one-dimensional MBL systems are well-approximated by
matrix product states (MPS) with a low bond dimension~\cite{friesdorf_many-body_2015,Bauer} 
if the system features an information mobility gap. These states have several
interesting properties, and in particular they feature an area law for the entanglement entropy which is logarithmic in the bond
dimension~\cite{eisert_colloquium:_2010}. Since our result is based on a particular entropic quantity, it might
be possible to use the properties of MPS to derive a fully analytical bound on the robustness of these systems
with respect to thermal noise. It is the hope that our work stimulates further cross-fertilization between the fields
of quantum thermodynamics and the study of quantum many-body systems out of equilibrium.
\section{Acknowledgments}

We would like to thank {\'A}lvaro Martin Alhambra, Johnnie Gray, Volkher Scholz, and Henrik Wilming for helpful discussions. C.~S.\ is funded by EPSRC. N.~N. is funded by the Alexander von Humboldt foundation. P.~B. acknowledges funding from the Templeton Foundation. J.~E. has been supported by the DFG (FOR 2724, CRC 183), the FQXi, and the Templeton Foundation. We acknowledge the Freie Universit{\"a}t Berlin for covering the costs of offsetting the emission generated by this research.

\vspace{0.05cm}

\begin{figure}[H]
	\begin{center}
		\begin{tabular}[b]{l c}
			\hline
			\textbf{Numerical simulations} & \\
			Total Kernel Hours [$\mathrm{h}$]& 120000\\
			Thermal Design Power per Kernel [$\mathrm{W}$]& 5.75\\
			Total Energy Consumption of Simulations [$\mathrm{kWh}$] & 1960\\
			Average Emission of CO$_2$ in Germany [$\mathrm{kg/kWh}$]& 0.56\\
			Total \co-Emission from Numerical Simulations [$\mathrm{kg}$] & 1098\\
			Were the Emissions Offset? & \textbf{Yes}\\
			\textbf{Transportation} & \\
			Total \co-Emission from Transportation [$\mathrm{kg}$] & 2780 \\
			Were the Emissions Offset? & \textbf{Yes}\\
			\hline
			Total \co-Emission [$\mathrm{kg}$] & 3878\\
			\hline
		\end{tabular}
		\caption{Estimated climate footprint of this paper. Prototyping is not included in these calculations. Estimations have been calculated using the examples of Scientific CO$_2$nduct \cite{scicon2019} and are correct to the best of our knowledge.}
	\end{center}
\end{figure}
%
%

\newpage
\onecolumngrid
\title{Supplementary Information: Bounding the resources for thermalizing many-body localized systems}
\maketitle
\section*{Supplementary Note 1: Collisional master equation and random unitary channels}
In this section, we consider a large class of dynamical processes involving $n$ subsystems (for example, particles or molecules),
where each of the subsystems has a given probability in time of interacting with another one (or with more than one at a time),
and the interaction is fully general as long as it conserves the total energy. For simplicity, below we only consider two-body
interactions, but extending the setting to $k$-body interactions is straightforward. Any such process can be described by
the following master equation,
\begin{equation}
\label{eq:stoch_process}
\frac{\partial \, \rho_n(t)}{\partial \, t} = \sum_{i, j} \lambda_{i,j} \left[ U_{i,j} \, \rho_n(t) \, U_{i,j}^{\dagger} - \rho_n(t) \right],
\end{equation}
where $\rho_n(t)$ is the state of the $n$ subsystems at time $t$, and $\lambda_{i,j} > 0$ is the rate at which the interaction $U_{i,j}$
between the $i$-th and $j$-th subsystem occurs. Each interaction is energy preserving, that is, $\left[ U_{i,j}, H^{(n)} \right] = 0$,
where $H^{(n)}$ is the Hamiltonian of the global system. In the case in which only two-body interactions are present in the
above equation, the total number of different unitaries is $N = \frac{1}{2} n (n-1)$. In the following, we re-label the two indices
$i,j$ with a single one $k = i,j$ taking values between $1$ and $N$.
At any given time $t > 0$, we show in the next section that the solution of the above process has the form
\begin{equation}
\rho_n(t) = \sum_{\mathbf{m}\in\mathbb{Z}^N} P_t(\mathbf{m}) \cdot \rho(\mathbf{m}) ,
\end{equation}
where the distribution
\begin{equation}
P_t(\mathbf{m}) = \prod_{k=1}^N p^{(\lambda_k)}_t(m_k),
\end{equation}
is given by the product of $N$ Poisson distributions, each with a different mean value $ \lambda_k t $.  For a given
$\mathbf{m} = \left( m_1 , m_2, \ldots, m_N \right)^T$ such that $m = \sum_{k=1}^N m_k$, the value of $P_t(\mathbf{m})$ is
the probability that $m$ interactions occurs during the time $t$, of which $m_k$ are described by the $k$-th unitary
operator $U_k$. On the other hand, the state $\rho(\mathbf{m})$ is a uniform mixture of different states obtained from
the initial state $\rho_n(t=0)$, by considering all different sequences of events giving rise to the same $ \mathbf{m} $,
\begin{equation}
\label{eq:m_state}
\rho(\mathbf{m}) = \frac{1}{\mathcal{N}_{\mathbf{m}}}
\sum_{\pi} U_{\pi}^{\mathbf{m}} \, \rho_n(t=0) \, (U_{\pi}^{\mathbf{m}})^{\dagger},
\end{equation}
where the unitary operator is
\begin{equation}
\label{eq:unitary_perm}
U_{\pi}^{\mathbf{m}} =
\pi(
\underbrace{U_1 \ldots U_1}_{m_1}
\underbrace{U_2 \ldots U_2}_{m_2}
\ldots
\underbrace{U_N \ldots U_N}_{m_N}
),
\end{equation}
and $\pi$ is an element of the symmetric group $S_m$ which produces a different permutation of the $m$ unitaries
describing the two-body interactions. Furthermore, 
\begin{equation}
\mathcal{N}_{\mathbf{m}} := \binom{m}{m_1,\cdots,m_N}
\end{equation} is the
multinomial coefficient, which is precisely the number of different permutations. Notice that the unitary operator
$U_{\pi}^{\mathbf{m}}$ commutes with the total Hamiltonian $H^{(n)}$, since it is obtained from two-body interactions
which commutes with $H^{(n)}$. In the special case in which the two-body operators $\left\{ U_{k} \right\}_k$ commute
with each other, the permutation $\pi$ can be dropped, and in Eq.~\eqref{eq:m_state} we can remove the weighted sum.
\par
As a result, the evolution of the system at any finite time $t\geq0$ can be described by a convex mixture of energy-preserving
unitary operations; specifically, for the dynamics given by the master equation~\ref{eq:stoch_process}, the channel which
maps the initial state into $\rho_n(t)$ is given by
\begin{equation}
\mathcal{E}(\cdot) = \sum_{\mathbf{m}\in\mathbb{Z}^N}
\frac{P_t(\mathbf{m})}{\mathcal{N}_{\mathbf{m}}}
\sum_{\pi}
(U_{\pi}^{\mathbf{m}}) \, \cdot \, (U_{\pi}^{\mathbf{m}})^{\dagger}.
\end{equation}
We can formally define a class of channels which are associated with this dynamics.
\begin{definition}[Convex mixtures of energy-preserving unitary operations]
	\label{def:ruc}
	For a given number of subsystems $n \in \N$ and a global Hamiltonian $H^{(n)}$, we define the class of maps $\mathfrak{E}_n$
	as composed by every channel generated by the master equation~\eqref{eq:stoch_process}, for any finite time $t \geq 0$,
	and for any choice of rates $\left\{ \lambda_{i,j} \right\}$ and of energy-preserving unitary operations $\left\{ U_{i,j} \right\}$.
\end{definition}
This very general set of maps is at the core of the thermalization results we derive in Supplementary Note 2.
\subsection{Solution of a non-commuting Poisson process}
\begin{proposition}[Solution to non-commuting Poisson process]
	\label{prop:solution_poisson_process}
	Consider a finite-dimensional Hilbert space $\mcH$ and the following master equation
	\begin{equation}
	\label{eq:poisson_process}
	\frac{\partial \, \rho(t)}{\partial \, t} = \sum_{k=1}^N \lambda_k \left[ U_k \rho(t) U_k^{\dagger} - \rho(t) \right],
	\end{equation}
	where $\rho(t) \in \SH{\mcH}$, and for all $k$ the coefficient $\lambda_k > 0$, and $U_k$ is a unitary operator
	acting on $\mcH$. The solution of this master equation is
	\begin{equation}
	\label{eq:solution_pp}
	\rho(t) = \sum_{m=0}^{\infty} p_t^{(\bar\lambda)}(m) \, \rho(m),
	\end{equation}
	where $p_t^{(\bar{\lambda})}(m)$ is a Poisson distribution whose mean value is $\bar{\lambda} \, t$, with $\bar{\lambda} = \sum_k \lambda_k$.
	The state $\rho(m)$ is obtained from the initial state $\rho(t=0)$ by applying $m$ times the channel $\Lambda$, i.e. $\rho(m) = \Lambda^{(m)} (\rho(t=0))$, where
	\begin{equation}
	\label{eq:lambda_map}
	\Lambda(\cdot) = \sum_{k=1}^N \frac{\lambda_k}{\bar{\lambda}} U_k \, \cdot \, U_k^{\dagger}.
	\end{equation}
\end{proposition}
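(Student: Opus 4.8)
The plan is to recognize Eq.~\eqref{eq:poisson_process} as a linear, first-order ODE on the finite-dimensional space of operators, whose generator is a \emph{bounded} superoperator, and then to solve it by exponentiation. First I would rewrite the right-hand side in terms of the single map $\Lambda$ of Eq.~\eqref{eq:lambda_map}. Since $\bar\lambda\,\Lambda(\cdot) = \sum_k \lambda_k U_k\cdot U_k^\dagger$ and $\sum_k\lambda_k = \bar\lambda$, the generator factorizes as
\begin{equation}
\mcL = \bar\lambda\,(\Lambda - \mathrm{id}),
\end{equation}
where $\mathrm{id}$ is the identity superoperator, so that the master equation reads $\dot\rho(t) = \mcL(\rho(t))$ with initial datum $\rho(0)$.

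Because $\mcL$ is a bounded linear map on a finite-dimensional space, the unique solution is $\rho(t) = e^{t\mcL}(\rho(0))$. As $\mathrm{id}$ commutes with $\Lambda$, the operator exponential splits, giving
\begin{equation}
e^{t\mcL} = e^{-\bar\lambda t}\,e^{\bar\lambda t\,\Lambda} = e^{-\bar\lambda t}\sum_{m=0}^{\infty}\frac{(\bar\lambda t)^m}{m!}\,\Lambda^{(m)}.
\end{equation}
Applying this to $\rho(0)$ and using $\rho(m) = \Lambda^{(m)}(\rho(0))$ reproduces Eq.~\eqref{eq:solution_pp}, with the coefficients $e^{-\bar\lambda t}(\bar\lambda t)^m/m!$ identified as the Poisson weights $p_t^{(\bar\lambda)}(m)$ of mean $\bar\lambda t$.

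Alternatively --- and this is the self-contained route I would actually write out --- I would \emph{define} $\rho(t)$ by the series in Eq.~\eqref{eq:solution_pp} and verify directly that it solves the problem. Differentiating the Poisson weights yields the standard recurrence $\partial_t p_t^{(\bar\lambda)}(m) = \bar\lambda\,\bigl(p_t^{(\bar\lambda)}(m-1) - p_t^{(\bar\lambda)}(m)\bigr)$, with the convention $p_t^{(\bar\lambda)}(-1):=0$. Substituting this into the term-by-term derivative, re-indexing the shifted sum, and invoking $\rho(m+1) = \Lambda(\rho(m))$ collapses the right-hand side to $\bar\lambda\,(\Lambda(\rho(t)) - \rho(t)) = \mcL(\rho(t))$, while the value at $t=0$ returns $\rho(0)$; uniqueness for linear ODEs then closes the argument.

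The steps are elementary, so there is no deep obstacle, and the only points requiring care are bookkeeping ones. I would justify the term-by-term differentiation of the infinite series by noting that each $\rho(m) = \Lambda^{(m)}(\rho(0))$ is a state --- the map $\Lambda$ being completely positive and trace preserving --- so that $\norm{\rho(m)}_1 = 1$ and the Poisson series converges absolutely and uniformly on compact $t$-intervals; and I would make explicit that it is precisely the commutation of $\mathrm{id}$ with $\Lambda$ that legitimizes splitting the operator exponential. As a consistency check, $\{p_t^{(\bar\lambda)}(m)\}_m$ is a probability distribution and each $\rho(m)$ a valid state, so $\rho(t)$ is manifestly a state for all $t$, in accordance with $\mcL$ generating a completely positive trace-preserving semigroup.
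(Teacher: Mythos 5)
Your proposal is correct, and the self-contained route you say you would actually write out --- differentiating the series term by term via the Poisson recurrence $\partial_t p_t^{(\bar\lambda)}(m) = \bar\lambda\bigl(p_t^{(\bar\lambda)}(m-1)-p_t^{(\bar\lambda)}(m)\bigr)$, re-indexing, and using linearity of $\Lambda$ --- is precisely the paper's own proof. The exponentiation of $\mcL=\bar\lambda(\Lambda-\mathrm{id})$ is an equivalent repackaging, and your added remarks on uniqueness and uniform convergence only tighten steps the paper leaves implicit.
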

\begin{proof}
	Let us first rearrange Eq.~\eqref{eq:poisson_process} in such a way that, on the right-hand side, only one operator is acting on
	the state $\rho(t)$, that is
	\begin{align*}
	\frac{\partial \, \rho(t)}{\partial \, t}
	&=
	\bar{\lambda} \left(\sum_{k=1}^N \frac{\lambda_k}{\bar{\lambda}} U_k \rho(t) U_k^{\dagger} - \rho(t) \right) \\
	&=
	\bar{\lambda} \big( \Lambda( \rho(t) ) - \rho(t) \big),
	\end{align*}
	where the map $\Lambda$ is defined in Eq.~\eqref{eq:lambda_map}. To show that Eq.~\eqref{eq:solution_pp} is the
	solution of the above differential equation, first notice that the Poisson distribution $p_t(m)$ is the sole time-dependant
	object, since $\rho(m) = \Lambda^{(m)}(\rho_0) = \Lambda \circ \ldots \circ \Lambda (\rho_0)$, and $\rho_0 = \rho(t=0)$.
	Thus, when taking the time derivative of $\rho(t)$, we can exploit the fact that
	\begin{equation*}
	\frac{\partial \, p_t(m)}{\partial \, t} = \bar{\lambda} \left[ p_t(m-1) - p_t(m) \right] .
	\end{equation*}
	Then, the time derivative of the solution in Eq.~\eqref{eq:solution_pp} is given by
	\begin{align*}
	\frac{\partial \, \rho(t)}{\partial \, t}
	&=
	\sum_{m=0}^{\infty} \bar{\lambda} \left[( p_t(m-1) - p_t(m) \right] \Lambda^{(m)}(\rho_0)
	=
	\bar{\lambda}
	\left[
	\sum_{m=0}^{\infty}  p_t(m) \Lambda^{(m+1)}(\rho_0)
	-
	\sum_{m=0}^{\infty} p_t(m) \Lambda^{(m)}(\rho_0)
	\right] \\
	&=  \bar{\lambda}
	\left[
	\Lambda \left( \sum_{m=0}^{\infty}  p_t(m) \rho(m) \right)
	-
	\sum_{m=0}^{\infty} p_t(m) \rho(m)
	\right]
	=
	\bar{\lambda} \big( \Lambda( \rho(t) ) - \rho(t) \big),
	\end{align*}
	where in the third line we use the fact that $\Lambda$ is linear and continuous.
\end{proof}
A straightforward corollary of the above proposition concerns the relation between the maps in $\mathfrak{E}_n$
and the class of (energy-preserving) \emph{random unitary channels}~\cite{audenaert_random_2008}, defined as
follows.
\begin{definition}[Energy preserving random unitary channels]
	For a given number of subsystems $n \in \N$ and a global Hamiltonian $H^{(n)}$, we define the class of energy-preserving
	random unitary channels $\mathfrak{R}_n$ as composed by every maps of the form
	\begin{equation}
	\chn( \cdot ) = \sum_k p_k \, U_k \cdot U_k^{\dagger},
	\end{equation}
	where $\left\{ p_k \right\}_k$ is a probability distribution, and each unitary operator $U_k$ preserves the energy,
	that is, $\left[ U_k , H^{(n)} \right] = 0$.
\end{definition}
The corollary of Prop.~\ref{prop:solution_poisson_process} is then the following.
\begin{corollary}[$\mathfrak{E}_n$ as subsets of energy preserving random unitary channels]
	\label{cor:E_sub_R}
	Given number of subsystems $n \in \N$ and a global Hamiltonian $H^{(n)}$, the set of maps $\mathfrak{E}_n$ is
	a subset of the class of energy-preserving random unitary channels $\mathfrak{R}_n$.
\end{corollary}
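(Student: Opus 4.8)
The plan is to derive the inclusion as a direct structural consequence of Proposition~\ref{prop:solution_poisson_process}, without any new computation. I would fix an arbitrary channel $\mathcal{E} \in \mathfrak{E}_n$; by definition it is the map sending $\rho_n(0)$ to $\rho_n(t)$ for some finite time $t$, some rates $\{\lambda_k\}$, and some energy-preserving unitaries $\{U_k\}$, where I relabel the pair $(i,j)$ by a single index $k \in \{1,\ldots,N\}$ as in Eq.~\eqref{eq:stoch_process}. Proposition~\ref{prop:solution_poisson_process} then immediately furnishes the closed form
\begin{equation}
\mathcal{E}(\cdot) = \sum_{m=0}^\infty p_t^{(\bar\lambda)}(m)\, \Lambda^{(m)}(\cdot),
\end{equation}
with $\Lambda(\cdot) = \sum_{k=1}^N (\lambda_k/\bar\lambda)\, U_k \cdot U_k^\dagger$ and $\bar\lambda = \sum_k \lambda_k$. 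The task is thereby reduced to recognising the right-hand side as an element of $\mathfrak{R}_n$.

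The key step is to expand each composition. Since $\Lambda$ is itself an energy-preserving random unitary channel -- its weights $\lambda_k/\bar\lambda$ form a probability distribution and each $U_k$ commutes with $H^{(n)}$ by assumption -- iterating it $m$ times yields
\begin{equation}
\Lambda^{(m)}(\cdot) = \sum_{k_1,\ldots,k_m=1}^N \frac{\lambda_{k_1}\cdots\lambda_{k_m}}{\bar\lambda^m}\, V_{\mathbf{k}} \cdot V_{\mathbf{k}}^\dagger, \qquad V_{\mathbf{k}} := U_{k_1}\cdots U_{k_m}.
\end{equation}
Each $V_{\mathbf{k}}$ is a product of operators commuting with $H^{(n)}$ and hence commutes with $H^{(n)}$ itself, while the accompanying coefficients are nonnegative and sum to one. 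Thus every $\Lambda^{(m)}$ already lies in $\mathfrak{R}_n$, and the remaining issue is to absorb the outer Poisson average.

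The final step folds the two layers of randomness into a single probability distribution: I would assign to each word $V_{\mathbf{k}}$ arising at order $m$ the total weight $p_t^{(\bar\lambda)}(m)\, \lambda_{k_1}\cdots\lambda_{k_m}/\bar\lambda^m$, exhibiting $\mathcal{E}(\cdot) = \sum_{m,\mathbf{k}} (\text{weight})\, V_{\mathbf{k}} \cdot V_{\mathbf{k}}^\dagger$ as a random unitary channel over the countable family of energy-preserving unitaries $\{V_{\mathbf{k}}\}$. The only point demanding care -- and the mildest of obstacles -- is that this is an \emph{infinite} mixture, so one must check that the combined weights still form a legitimate probability distribution and that the series converges. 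This is immediate: for each fixed $m$ the inner weights sum to one, and $\sum_m p_t^{(\bar\lambda)}(m) = 1$, so the grand total is one and the series converges in trace norm. As every $V_{\mathbf{k}}$ commutes with $H^{(n)}$, this displays $\mathcal{E}$ as an energy-preserving random unitary channel, i.e.\ $\mathcal{E} \in \mathfrak{R}_n$, giving the desired inclusion.
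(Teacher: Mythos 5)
Your proposal is correct and follows essentially the same route as the paper: both invoke Proposition~\ref{prop:solution_poisson_process} to write the time-$t$ channel as a Poisson mixture of iterates of $\Lambda$, expand the iterates into conjugations by products of the $U_k$ (the paper indexes these by the type vector $\mathbf{m}$ together with permutations $\pi$, you index them by ordered words $\mathbf{k}$ — the same decomposition), and observe that each such product commutes with $H^{(n)}$ while the combined weights form a probability distribution. Your explicit check that the resulting countably infinite mixture is normalized and convergent is a detail the paper leaves implicit, but nothing of substance differs.
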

\subsection{Solution of several independent Poisson processes}
Notice that the solution of Eq.~\eqref{eq:poisson_process} can be modified so that the single Poisson distribution in
Eq.~\eqref{eq:solution_pp} is replaced by a product of $N$ Poisson distributions, each of them governing the
number of a specific two-body interaction applied to the initial state at time $t$. To show this, let us first introduce
some useful notation. For the $ i $-th two-body process, denote $ \lambda_i $ as the corresponding rate, and denote
the tuple $ \mathbf{\lambda} = (\lambda_1,\cdots,\lambda_N) $. Consider the state $\rho(m) = \Lambda^{(m)}(\rho_0)$,
which is the state where a total of $ m $ such two-body interactions have happened. Let $ \mathcal{M}_m = \lbrace
\mathbf{m}\in\mathbb{N}^N |\sum_i m_i = m \rbrace $ denote the set of $ N $-dimensional tuples consisting of non-negative
integers, such that the sum of all elements equals $ m $. We can explicitly re-write $\rho(m)$ as
\begin{align}\label{eq:rho_m}
\rho(m) = \Lambda^{(m)}(\rho_0) = \sum_{\mathbf{m}\in\mathcal{M}_m} \mathcal{N}_{\mathbf{m}}\cdot C_{\vec\lambda}\cdot \rho(\mathbf{m}),
\end{align}
where $\mathcal{N}_{\mathbf{m}} := \binom{m}{m_1, \ldots, m_N}$ is the multinomial coefficient, 
\begin{align}
C_{\vec\lambda} :=
\prod_{k=1}^N \left( \frac{\lambda_k}{\bar{\lambda}} \right)^{m_k}  
\end{align}
is the product of the corresponding weights,
and the state $\rho(\mathbf{m})$ is a mixture over all possible combination of $m$ unitaries, where each unitary
$U_k$ appears $m_k$ times,
\begin{equation}
\rho(\mathbf{m}) =
\frac{1}{\mathcal{N}_{\mathbf{m}}}
\sum_{\pi} U_{\pi}^{\mathbf{m}} \, \rho_0 \, (U_{\pi}^{\mathbf{m}})^{\dagger},
\end{equation}
where the unitary $U_{\pi}^{\mathbf{m}}$ has been defined in Eq.~\eqref{eq:unitary_perm}.
\par
Let us now consider the corresponding Poisson distribution $P_t^{(\bar\lambda)}(m)$, with a mean value $ \bar\lambda t$.
We want to see how this relates to the $ N $ independent Poisson processes with mean values $ \vec\lambda $. 
Noting that $ \sum_k m_k = m $ and $ \sum_k \lambda_k = \bar\lambda $, we can re-write this distribution as follows,
\begin{align}\label{eq:joint_prob_dist}
p_t^{(\bar\lambda)} (m) &= \left( \bar{\lambda} t \right)^m  \frac{e^{- \bar{\lambda} t}}{m!}
= \frac{1}{m!} \prod_{k=1}^N \left( \bar{\lambda} t \right)^{m_k} e^{- \lambda_k \, t}
= \frac{1}{\mathcal{N}_{\mathbf{m}}}\cdot \prod_{k=1}^N \frac{\left( \bar{\lambda} t \right)^{m_k} e^{- \lambda_k \, t}}{m_k !} \nonumber\
=\frac{1}{\mathcal{N}_{\mathbf{m}}} \cdot \prod_{k=1}^N \left[ \left( \frac{\bar{\lambda}}{\lambda_k} \right)^{m_k} p^{\lambda_k}_t(m_k)\right] \nonumber\\
&=\frac{1}{\mathcal{N}_{\mathbf{m}}}\cdot \frac{1}{C_{\vec\lambda}}\cdot \prod_{k=1}^N p^{\lambda_k}_t(m_k),
\end{align}
by noting that each $p^{(\lambda_k)}_t(m_k)$ is a Poisson distribution with mean value $\lambda_k \, t$. If we now
replace Eq.~\eqref{eq:rho_m} and Eq.~\eqref{eq:joint_prob_dist} into Eq.~\eqref{eq:solution_pp}, we see that the
coefficients $ \mathcal{N}_{\mathbf{m}} $ and $  C_{\vec\lambda}$ cancel out, and therefore, for a given time $t > 0$,
\begin{align}
\rho(t) &= \sum_{m=0}^{\infty} \sum_{\mathbf{m}\in\mathcal{M}_m}
\left[ \prod_k p^{(k)}_t(m_k) \right] \rho(\mathbf{m})
= \sum_{\mathbf{m}\in\mathbb{Z}^N}
\left[ \prod_k p^{(\lambda_k)}_t(m_k) \right] \rho(\mathbf{m}).
\end{align}
%

\section{Supplementary Note 2: Upper and lower bounds on the size of the thermal environment}
In this section we prove the main results presented in the main text, namely the upper and lower bounds
to the size of the external thermal bath required to thermalize a region of a many-body system. These
bounds depend on the entropic quantity known as \emph{max-relative entropy}~\cite{datta_min-_2009},
defined for two quantum states $\rho, \sigma \in \SH{\mcH}$ such that $\supp(\rho) \subseteq \supp(\sigma)$
as
\begin{equation}
\Dmax{}\left( \rho \| \sigma \right) = \inf \left\{ \lambda \in \R \ : \ \rho \leq 2^{\lambda} \sigma \right\}.
\end{equation}
The smooth max-relative entropy between the same two quantum states, for $\epsilon > 0$, is defined as 
\begin{equation}
\label{eq_supp:smooth_max_rel_ent}
\Dmax{\epsilon}(\rho \| \sigma) = \inf_{\tilde{\rho} \in B_{\varepsilon}(\rho)} \Dmax{}\left( \tilde{\rho} \| \sigma \right),
\end{equation}
where $B_{\epsilon}(\rho)$ is the ball of radius $\epsilon$ around the state $\rho$ with respect to the distance induced
by the trace norm.
The main technical tool we use to derive our bounds is a result from quantum information theory known as
the \emph{convex split lemma}, first introduced and proved in Ref.~\cite[Lemma~2.1 in Supp.~Mat.]{anshu_quantum_2017}.
\begin{lemma}[Convex split lemma]
	\label{lem:csl}
	Consider a finite-dimensional Hilbert space $\mcH$ and two states $\rho, \sigma \in \SH{\mcH}$ such that
	$\supp(\rho) \subseteq \supp(\sigma)$. Then the state $\rho^{(n)} \in \SH{\mcH^{\otimes n}}$ defined as
	\begin{equation}
	\label{eq:csl_state}
	\rho^{(n)}
	=
	\frac{1}{n} \sum_{m=1}^n
	\sigma^{\otimes m-1} \otimes \rho \otimes \sigma^{\otimes n-m},
	\end{equation}
	is such that its trace distance to the $n$-copy i.i.d state $\sigma^{\otimes n}$ is upper-bounded as
	\begin{equation}
	\norm{\rho^{(n)} - \sigma^{\otimes n}}_1^2 \leq {\frac{2^{\Dmax{}\left( \rho \| \sigma \right)}}{n}}.
	\end{equation}
\end{lemma}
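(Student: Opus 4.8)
The plan is to control the trace norm by a $\sigma$-weighted Hilbert--Schmidt norm, since the latter linearises cleanly over the tensor structure of the two states. Write $\Sigma := \sigma^{\otimes n}$ and $A := \rho^{(n)} - \Sigma$, and set $\tau_m := \sigma^{\otimes m-1}\otimes\rho\otimes\sigma^{\otimes n-m}$, so that $A = \tfrac1n\sum_{m=1}^n(\tau_m-\Sigma)$. Assume first that $\sigma$ (hence $\Sigma$) is invertible, the general case following by restricting to $\supp(\sigma)^{\otimes n}$, which contains $\supp(\tau_m)$ for every $m$ because $\supp(\rho)\subseteq\supp(\sigma)$. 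Putting $B := \Sigma^{-1/4}A\Sigma^{-1/4}$ and applying H\"older's inequality for Schatten norms to $A = \Sigma^{1/4}B\Sigma^{1/4}$ with exponents $(4,2,4)$ gives $\norm{A}_1 \le \norm{\Sigma^{1/4}}_4^2\,\norm{B}_2 = \norm{B}_2$, where $\norm{\Sigma^{1/4}}_4 = (\tr\Sigma)^{1/4} = 1$. This is exactly the place where normalisation of $\sigma$ enters and prevents a dimension factor from appearing. Hence $\norm{A}_1^2 \le \norm{B}_2^2 = \Tr{\Sigma^{-1/2}A\Sigma^{-1/2}A}$, and the problem reduces to evaluating this one trace.

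Next I would expand the resulting double sum as $\Tr{\Sigma^{-1/2}A\Sigma^{-1/2}A} = \tfrac{1}{n^2}\sum_{m,m'}T_{m,m'}$, where $T_{m,m'} = \Tr{\Sigma^{-1/2}(\tau_m-\Sigma)\Sigma^{-1/2}(\tau_{m'}-\Sigma)}$. The key simplification is that conjugation strips $\sigma$ off every tensor factor: $\Sigma^{-1/2}\tau_m\Sigma^{-1/2} = \id^{\otimes m-1}\otimes\eta\otimes\id^{\otimes n-m}$ with $\eta := \sigma^{-1/2}\rho\sigma^{-1/2}$. Using $\tr\sigma = \tr\rho = 1$ and $\Tr{\eta\sigma}=1$, the three ``linear'' cross terms in $T_{m,m'}$ each equal $1$, so $T_{m,m'} = \Tr{\Sigma^{-1/2}\tau_m\Sigma^{-1/2}\tau_{m'}} - 1$. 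The remaining quadratic term factorises site by site: for $m\neq m'$ every factor contributes a trace equal to $1$, giving $\Tr{\Sigma^{-1/2}\tau_m\Sigma^{-1/2}\tau_{m'}}=1$ and hence $T_{m,m'}=0$; for $m=m'$ only the $m$-th factor is nontrivial, yielding $\Tr{\sigma^{-1/2}\rho\sigma^{-1/2}\rho}$ and thus $T_{m,m}=\Tr{\sigma^{-1/2}\rho\sigma^{-1/2}\rho}-1$. Only the $n$ diagonal terms survive, so $\Tr{\Sigma^{-1/2}A\Sigma^{-1/2}A}=\tfrac1n\big(\Tr{\sigma^{-1/2}\rho\sigma^{-1/2}\rho}-1\big)$.

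It then remains to bound $\Tr{\sigma^{-1/2}\rho\sigma^{-1/2}\rho}$ through the definition of the max-relative entropy. Writing $k := 2^{\Dmax{}(\rho\|\sigma)}$, the defining inequality $\rho\le k\,\sigma$ becomes, after conjugation by $\sigma^{-1/2}$, the operator bound $\eta=\sigma^{-1/2}\rho\sigma^{-1/2}\le k\,\id$ on $\supp(\sigma)$. Since $\rho\ge 0$, this gives $\Tr{\sigma^{-1/2}\rho\sigma^{-1/2}\rho}=\Tr{\eta\,\rho}\le k\,\tr\rho = k$. Combining the three steps, $\norm{A}_1^2\le\tfrac1n(k-1)\le k/n = 2^{\Dmax{}(\rho\|\sigma)}/n$, which is precisely the asserted inequality.

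The step I expect to be the genuine crux is the first one: choosing the norm interpolation so that the normalisation $\tr\Sigma=1$ absorbs the would-be dimension factor, since a naive estimate of the form $\norm{A}_1\le\sqrt{\dim}\,\norm{A}_2$ would be useless here. Once the trace norm has been traded for the $\Sigma$-weighted $2$-norm, the rest is essentially forced: conjugation by $\Sigma^{-1/2}$ converts each $\tau_m$ into a single-site insertion of $\eta$, which is exactly what makes the off-diagonal terms collapse and isolates $\Tr{\eta\,\rho}$ as the only quantity that requires the $\Dmax{}$ input.
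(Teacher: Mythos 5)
Your proof is correct. Note first that the paper itself does not prove this lemma at all: it imports it verbatim from Ref.~\cite[Lemma~2.1 in Supp.~Mat.]{anshu_quantum_2017}, so there is no in-paper argument to compare against, and your write-up would in fact make the paper self-contained on this point. Compared with the proof in the cited reference, your route is genuinely different: the original argument bounds the quantum relative entropy $D(\rho^{(n)}\|\sigma^{\otimes n})\leq\log\bigl(1+2^{\Dmax{}(\rho\|\sigma)}/n\bigr)$ via operator concavity of the logarithm and then converts to a distance measure by a Pinsker-type inequality, whereas you work directly with the $\sigma$-weighted second moment (essentially the $\chi^2$-divergence). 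All the individual steps check out: the three-factor H\"older inequality with exponents $(4,2,4)$ and $\nnorm{\Sigma^{1/4}}_4=(\Tr{\Sigma})^{1/4}=1$ correctly yields $\norm{A}_1^2\leq\Tr{\Sigma^{-1/2}A\Sigma^{-1/2}A}$; the conjugation $\Sigma^{-1/2}\tau_m\Sigma^{-1/2}=\id^{\otimes m-1}\otimes\eta\otimes\id^{\otimes n-m}$ makes every off-diagonal term $T_{m,m'}$ ($m\neq m'$) vanish because each tensor factor traces to $1$; the diagonal terms give $\tfrac{1}{n}\bigl(\Tr{\eta\rho}-1\bigr)$; and $\rho\leq 2^{\Dmax{}(\rho\|\sigma)}\sigma$ (the infimum in the definition is attained in finite dimensions) gives $\Tr{\eta\rho}\leq 2^{\Dmax{}(\rho\|\sigma)}$. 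The support caveat is handled correctly by restricting to $\supp(\sigma)^{\otimes n}$. What your approach buys is a shorter, more elementary derivation that lands directly on the trace-norm statement as the paper quotes it, and in fact proves the marginally stronger bound $\norm{\rho^{(n)}-\sigma^{\otimes n}}_1^2\leq\bigl(2^{\Dmax{}(\rho\|\sigma)}-1\bigr)/n$; what it gives up is the relative-entropy intermediate bound of the original, which is the form needed in some of the other applications of convex split (e.g.\ those phrased in terms of purified distance or smooth max-information).
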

In the following, we consider a specific channel which, when acting on the $n$-subsystem state $\rho \otimes
\sigma^{\otimes n-1}$, is able to produce the state $\rho^{(n)}$ given in Eq.~\eqref{eq:csl_state} of the above
lemma. This channel belongs to the set of random unitary channels acting on $n$ subsystems, and is defined as
\begin{equation}
\label{eq:channel_csl}
\bar{\chn}_n(\cdot) = \frac{1}{n} \sum_{i=1}^n U^{(1,i)}_{\text{swap}} \, \cdot \, U^{(1,i) \, \dagger}_{\text{swap}},
\end{equation}
where $U^{(i,j)}_{\text{swap}}$ denotes the unitary swap between the $i$-th and the $j$-th subsystems.
\par
We now specialise the setting to the one considered in the main text. We consider a region $R$ described
by the Hilbert space $\mcH_R$, with Hamiltonian $H_R$ and state $\omega_R$, and an external bath $B$
composed by $n-1$ subsystems at inverse temperature $\beta$. The Hilbert space of the bath is $\mcH_B
= \mcH_R^{\otimes n-1}$, with Hamiltonian $H_B = \sum_{i=1}^{n-1} H_R^{(i)}$, where the operator $H_R^{(i)}$
only acts non-trivially on the $i$-th subsystem of the bath. The state of the bath is thermal, thus defined by
$\tau_{\beta}(H_B) = \tau_{\beta}(H_R)^{\otimes n-1}$. We are interested in the process of thermalization of
the region $R$ by means of the collisional models described by the master equation of the form given in
Eq.~\eqref{eq:stoch_process}. For this specific setting, we say that a channel $\chn \in \mathfrak{E}_n$ is
able to $\epsilon$-thermalize the region $R$ if,
\begin{equation}
\label{eq:eps_therm_specific}
\norm{\chn \left( \omega_R \otimes \tau_{\beta}(H_R)^{\otimes n - 1} \right) - \tau_{\beta}(H_R)^{\otimes n}}_1 \leq \epsilon,
\end{equation}
that is, if the output of the channel is close, in trace distance, to the thermal state of region and bath.
\par
The quantity we seek to bound is $n_\epsilon$, that is, the minimum number of subsystems needed
to $\epsilon$-thermalize the region $R$, when the global dynamics is produced by a master equation of the
form given in Eq.~\eqref{eq:stoch_process},
\begin{equation}
\label{eq:opt_n_epsilon}
n_\epsilon := \min \left\{ n \in \N \, | \, \exists \, \chn \in \mathfrak{E}_n
\, : \,
\norm{\chn \left( \omega_R \otimes \tau_{\beta}(H_R)^{\otimes n - 1} \right) - \tau_{\beta}(H_R)^{\otimes n}}_1 \leq \epsilon \right\}.
\end{equation}
It is worth noting that, for the current setting, the channel $\bar{\chn}_n$ defined in Eq.~\eqref{eq:channel_csl}
belongs to the class of maps $\mathfrak{E}_n$. Indeed, this channel can be obtained from a master equation
describing stochastic collisions that occur between the region $R$ and each subsystem of the bath $B$, where the
collision is described by a swap operator,
\begin{equation}
\label{eq:master_eq_for_swap}
\frac{\partial \, \rho_{RB}(t)}{\partial \, t}
=
\sum_{i<j}^n \frac{1}{\tau} \left( U^{(i,j)}_{\text{swap}} \, \rho_{RB}(t) \, U^{(i,j) \, \dagger}_{\text{swap}} - \rho_{RB}(t) \right).
\end{equation}
For simplicity, in the above equation the collision rate is the same for all subsystems; however, the steady-state, and therefore the resulting channel associated to it, does not depend on the specifics of these rates (since we consider the infinite-time limit). Using the result of Prop.~\ref{prop:solution_poisson_process}, it is easy to show that, for an initial state $\rho_{RB}(t=0) = \omega_R \otimes \tau_{\beta}(H_R)^{\otimes n - 1}$, the state at time $t$ is
\begin{equation}
\rho_{RB}(t) = e^{- n \frac{t}{\tau}} \, \omega_R \otimes \tau_{\beta}(H_R)^{\otimes n - 1} + \left( 1 - e^{- n \frac{t}{\tau}} \right) \rho_{RB}^{(n)},
\end{equation}
where $\rho_{RB}^{(n)} = \frac{1}{n} \sum_{m=1}^n \tau_{\beta}(H_R)^{\otimes m-1} \otimes \omega_R \otimes \tau_{\beta}(H_R)^{\otimes n-m}$ is the state in Eq.~\eqref{eq:csl_state}. Thus, we see that, under this stochastic collision model, the system approaches its steady state exponentially fast in the collision rate $\tau^{-1}$, and in the number of subsystems composing region and bath. Finally, notice that, since the Hamiltonian of each subsystem is the same, the swap operator trivially conserves the energy.
\par
We are now able, with the help of Lemma~\ref{lem:csl}, to derive an upper bound on the quantity $n_{\epsilon}$.
The upper bound is obtained by providing an explicit protocol able to $\epsilon$-thermalize the system, and by
computing the number of subsystems $n$ needed for it.
\begin{theorem} [Upper bound on $n_{\epsilon}$]\label{thm_supp:upper}
	For a given Hamiltonian $H_R$, inverse temperature $\beta$, and a constant $\epsilon > 0$, we have that
	\begin{align}
	n_\epsilon \leq \frac{1}{\epsilon^2} \, 2^{\Dmax{}\left( \omega_R \| \tau_{\beta}(H_R) \right)}.
	\end{align}
\end{theorem}
\begin{proof}
	Let us consider the action of the channel $\bar{\chn}_n$, defined in Eq.~\eqref{eq:channel_csl}, on the initial state
	of the global system (region and bath) $\omega_R \otimes \tau_{\beta}(H_R)^{\otimes n - 1}$. It is easy to show that
	the final state of this channel is given by
	\begin{equation}
	\bar{\chn}_n \left( \omega_R \otimes \tau_{\beta}(H_R)^{\otimes n - 1} \right)
	=
	\frac{1}{n} \sum_{m=1}^n
	\tau_{\beta}(H_R)^{\otimes m-1} \otimes \omega_R \otimes \tau_{\beta}(H_R)^{\otimes n-m},
	\end{equation}
	which takes the same form of the state in the convex split lemma, see Eq.~\eqref{eq:csl_state}. Then, it directly
	follows from Lemma~\ref{lem:csl} that
	\begin{equation}
	\norm{\bar{\chn}_n \left( \omega_R \otimes \tau_{\beta}(H_R)^{\otimes n - 1} \right)
		-
		\tau_{\beta}(H_R)^{\otimes n}}_1^2 \leq {\frac{2^{\Dmax{}\left( \omega_R \| \tau_{\beta}(H_R) \right)}}{n}}.
	\end{equation}
	For the above trace distance to be smaller than $\epsilon$, that is, for the region to $\epsilon$-thermalize,
	we need a number of subsystems 
	\begin{equation}
	n = \frac{1}{\epsilon^2} 2^{\Dmax{}\left( \omega_R \| \tau_{\beta}(H_R) \right)},
	\end{equation}
	which closes the proof.
\end{proof}
\par
Let us now derive a lower bound to the quantity $n_{\epsilon}$. To do so, we first need to introduce two lemmata;
the first one is just a slight modification of Ref.~\cite[Fact~4 in Supp.~Mat.]{anshu_quantifying_2018}, where
we replace the quantum fidelity with the trace distance.
\begin{lemma}[Trace distance bound]
	\label{lem:cq_trace_dist}
	Consider two quantum states $\rho_A, \sigma_A \in \SH{\mcH_A}$, and let $\rho_{AB} \in \SH{\mcH_A \otimes \mcH_B}$
	be a classical-quantum state such that $\rho_A = \Trp{B}{\rho_{AB}}$. Then, there exists a classical-quantum state
	$\sigma_{AB} \in \SH{\mcH_A \otimes \mcH_B}$ such that $\sigma_A = \Trp{B}{\sigma_{AB}}$ and
	\begin{equation}
	\norm{\rho_{AB} - \sigma_{AB}}_1 \leq 2 \, \norm{\rho_A - \sigma_A}_1^{\frac{1}{2}}.
	\end{equation}
	Furthermore, $\supp ( \sigma_B ) \subseteq \supp( \rho_B )$.
\end{lemma}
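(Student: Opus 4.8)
The plan is to deduce this trace-distance statement from its known fidelity counterpart and then translate between the two metrics. Writing the root fidelity as $F(\rho,\sigma) = \norm{\sqrt{\rho}\sqrt{\sigma}}_1$, so that $F = 1$ exactly when the states coincide, the cited Fact~4 of Ref.~\cite{anshu_quantifying_2018} guarantees, for a classical-quantum $\rho_{AB}$ with marginal $\rho_A$ and any target $\sigma_A$, the existence of a classical-quantum $\sigma_{AB}$ with $\Trp{B}{\sigma_{AB}} = \sigma_A$, with $\supp(\sigma_B) \subseteq \supp(\rho_B)$, and with $F(\rho_{AB},\sigma_{AB}) \ge F(\rho_A,\sigma_A)$. (Monotonicity of fidelity under the partial trace $\Trp{B}{\cdot}$ forces the reverse inequality too, so this is in fact an equality, but only the stated direction is needed.) I would take exactly this $\sigma_{AB}$: it already has the correct marginal and the required support property, so the only task left is to convert its fidelity guarantee into the advertised trace-norm bound.

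For the conversion I would invoke the Fuchs--van de Graaf inequalities, which in terms of the root fidelity read
\begin{equation*}
1 - F(\rho,\sigma) \le \tfrac{1}{2}\norm{\rho-\sigma}_1 \le \sqrt{1 - F(\rho,\sigma)^2}
\end{equation*}
for any two states. Applying the upper inequality to $\rho_{AB},\sigma_{AB}$, using $F(\rho_{AB},\sigma_{AB}) \ge F(\rho_A,\sigma_A)$ together with monotonicity of $t \mapsto \sqrt{1-t^2}$, factoring $1 - F^2 = (1-F)(1+F)$ with $1+F \le 2$, and finally applying the lower inequality to $\rho_A,\sigma_A$, gives the chain
\begin{equation*}
\tfrac{1}{2}\norm{\rho_{AB}-\sigma_{AB}}_1 \le \sqrt{1 - F(\rho_A,\sigma_A)^2} \le \sqrt{2\bigl(1 - F(\rho_A,\sigma_A)\bigr)} \le \sqrt{\norm{\rho_A-\sigma_A}_1},
\end{equation*}
which rearranges to $\norm{\rho_{AB}-\sigma_{AB}}_1 \le 2\,\norm{\rho_A-\sigma_A}_1^{1/2}$, as claimed. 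The factor of two and the square root in the statement are thus precisely artifacts of passing through fidelity.

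To keep the argument self-contained in the regime actually used --- where $\rho_A$ and $\sigma_A$ commute, i.e.~both are diagonal in the classical basis $\{\ket{x}\}$ of $\rho_{AB} = \sum_x p_x \proj{x}_A \otimes \rho_B^x$ --- I would exhibit the extension explicitly rather than cite Fact~4. Writing $\sigma_A = \sum_x q_x \proj{x}_A$, set $\sigma_{AB} = \sum_x q_x \proj{x}_A \otimes \rho_B^x$, keeping the conditional states $\rho_B^x$ but reweighting them; for labels with $p_x = 0$ one simply picks any $\rho_B^x$ supported on $\supp(\rho_B)$, securing $\supp(\sigma_B) \subseteq \supp(\rho_B)$. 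A short block-diagonal computation then gives $F(\rho_{AB},\sigma_{AB}) = \sum_x \sqrt{p_x q_x} = F(\rho_A,\sigma_A)$ (and in fact $\norm{\rho_{AB}-\sigma_{AB}}_1 = \sum_x |p_x - q_x| = \norm{\rho_A-\sigma_A}_1$, a strictly stronger bound in this case), so the conversion above applies verbatim.

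The genuine obstacle is the non-commuting case, where $\sigma_A$ need not be diagonal in $\{\ket{x}\}$ and the naive reweighting fails: one must instead produce a fidelity-optimal classical-quantum extension, which is the content of Fact~4 and rests on an Uhlmann-type argument that simultaneously respects the classical structure on $A$ and the support constraint on $B$. Since the present lemma only asks for the trace-distance inequality, I would import that fidelity statement as a black box and let the Fuchs--van de Graaf step do the rest; establishing it from scratch is the only part requiring real work, and it is exactly the part we are entitled to borrow from the cited reference.
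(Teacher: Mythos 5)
Your proof is correct and takes essentially the same route as the paper: you invoke Fact~4 of Anshu et al.\ as a black box for the fidelity-preserving classical-quantum extension and then convert to trace norm via the two Fuchs--van de Graaf inequalities together with $1-F^2\le 2(1-F)$, which is exactly the paper's argument. The only cosmetic differences are that the paper states the fidelity relation as an equality and squares the whole chain rather than taking square roots, and that your explicit commuting-case construction is an additional illustration the paper does not include.
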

\begin{proof}
	Under the hypotheses of this lemma, it was shown in Ref.~\cite[Fact~4 in Supp.~Mat.]{anshu_quantifying_2018} that
	\begin{equation}
	\label{eq:fid_equality}
	F(\rho_{AB},\sigma_{AB}) = F(\rho_A,\sigma_A) ,
	\end{equation}
	where $F(\rho,\sigma) = \norm{\sqrt{\rho} \sqrt{\sigma}}_1$ if the quantum fidelity between $\rho$ and $\sigma$.
	It is known that the trace distance between two states is linked to the quantum fidelity by the following chain of inequalities,
	\begin{equation}
	\label{eq:trace_dist_fid}
	1 - F(\rho,\sigma) \leq \frac{1}{2} \norm{\rho - \sigma}_1 \leq \sqrt{1 - F(\rho,\sigma)^2}.
	\end{equation}
	Therefore, we have that
	\begin{align}
	\norm{\rho_{AB} - \sigma_{AB}}_1^2
	\leq
	4 \left( 1 - F(\rho_{AB}, \sigma_{AB})^2 \right)
	=
	4 \left( 1 - F(\rho_A, \sigma_A)^2 \right)
	\leq
	4 \norm{\rho_A - \sigma_A}_1
	\end{align}
	where the first inequality follows from the rhs of Eq.~\eqref{eq:trace_dist_fid}, the equality from
	Eq.~\eqref{eq:fid_equality}, and the second inequality follows from the lhs of Eq.~\eqref{eq:trace_dist_fid}.
\end{proof}
We now recall and prove another result used in Ref.~\cite[Fact~6 in Supp.~Mat.]{anshu_quantifying_2018} that
we use to lower bound the quantity $n_{\epsilon}$ in the next theorem.
\begin{lemma}[\cite{anshu_quantifying_2018}]
	\label{lem:op_ineq_cq}
	Consider a classical-quantum state $\rho_{AB} \in \SH{\mcH_A \otimes \mcH_B}$, where $B$ is the classical part.
	Let $\Pi_B \in \BH{\mcH_B}$ be the projector onto the support of $\rho_B = \Trp{A}{\rho_{AB}}$. Then the
	following operator inequality holds,
	\begin{equation}
	\rho_{AB} \leq \rho_A \otimes \Pi_B.
	\end{equation}
\end{lemma}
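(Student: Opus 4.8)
The plan is to exploit the defining structure of a classical--quantum state so that the operator inequality decouples into a family of block-wise inequalities on the $A$ system. First I would fix the orthonormal basis $\{\ket{b}\}_b$ of $\mcH_B$ in which $\rho_{AB}$ is classical on $B$ and write
\begin{equation}
\rho_{AB} = \sum_b p_b \, \rho_A^{(b)} \otimes \proj{b},
\end{equation}
where $\{p_b\}_b$ is a probability distribution and each $\rho_A^{(b)} \in \SH{\mcH_A}$. Tracing out $A$ then gives $\rho_B = \sum_b p_b \proj{b}$, so that the support projector is $\Pi_B = \sum_{b : p_b > 0} \proj{b}$, while tracing out $B$ gives the marginal $\rho_A = \sum_b p_b \, \rho_A^{(b)}$.

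Next I would form the candidate positive operator $\rho_A \otimes \Pi_B - \rho_{AB}$ and observe that it is block-diagonal with respect to the classical register $B$. Since every term of $\rho_{AB}$ with $p_b = 0$ vanishes, the sum may be restricted to those $b$ with $p_b > 0$, giving
\begin{equation}
\rho_A \otimes \Pi_B - \rho_{AB} = \sum_{b : p_b > 0} \left( \rho_A - p_b \, \rho_A^{(b)} \right) \otimes \proj{b}.
\end{equation}
Because the $\proj{b}$ are mutually orthogonal, this operator is positive semidefinite if and only if each coefficient $\rho_A - p_b \, \rho_A^{(b)}$ is positive semidefinite on $\mcH_A$.

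Finally I would verify exactly this block-wise positivity. For each $b$ with $p_b > 0$ one has
\begin{equation}
\rho_A - p_b \, \rho_A^{(b)} = \sum_{b' \neq b} p_{b'} \, \rho_A^{(b')} \geq 0,
\end{equation}
since it is a sum of positive semidefinite operators (each $p_{b'} \geq 0$ and each $\rho_A^{(b')} \geq 0$). Hence every block is nonnegative, the whole difference is positive semidefinite, and $\rho_{AB} \leq \rho_A \otimes \Pi_B$ follows. The argument is essentially elementary; the only point requiring a little care -- and the closest thing to an obstacle -- is the bookkeeping with the $p_b = 0$ blocks, i.e.\ checking that using $\Pi_B$ rather than the full identity $\id$ on $B$ is consistent. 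This holds precisely because those blocks contribute nothing to $\rho_{AB}$, so no negative contribution can appear outside the support of $\rho_B$.
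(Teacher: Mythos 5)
Your proof is correct and follows essentially the same route as the paper's: both expand the classical--quantum state as $\sum_b p_b\,\rho_A^{(b)}\otimes\proj{b}$ and observe that $\rho_A\otimes\Pi_B-\rho_{AB}$ is a sum of manifestly positive terms $p_{b'}\rho_A^{(b')}\otimes\proj{b}$ with $b'\neq b$. Your explicit handling of the $p_b=0$ blocks is a slightly more careful piece of bookkeeping than the paper's (which simply indexes over the support of $\rho_B$ from the outset), but the argument is the same.
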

\begin{proof}
	Since $\rho_{AB}$ is a classical-quantum state, there exists a probability distribution $\{ p_i \}_{i=1}^d$,
	being $d$ the dimension of the support of the classical part, and a set of states $\{ \rho_A^{(i)} \}_{i=1}^d$
	in $\SH{\mcH_A}$ such that $\rho_{AB} = \sum_{i=1}^d p_i \, \rho_A^{(i)} \otimes \ket{i}\bra{i}_B$. The reduced state
	on the quantum part of the system is $\rho_A = \sum_{i=1}^d p_i \, \rho_A^{(i)}$, and consequently we have that
	$\rho_A \otimes \Pi_B = \sum_{i,j = 1}^d p_i \, \rho_A^{(i)} \otimes \ket{j}\bra{j}_B$. Then, it is easy to show that
	the operator
	\begin{equation}
	\rho_A \otimes \Pi_B - \rho_{AB}  = \sum_{i \neq j}^d p_i \, \rho_A^{(i)} \otimes \ket{j}\bra{j}_B,
	\end{equation}
	is positive semi-definite, since it is composed by a positive mixture of states.
\end{proof}
We are now in the position to derive a lower bound for the quantity $n_{\epsilon}$ defined in Eq.~\eqref{eq:opt_n_epsilon}.
Our proof is inspired by the one used in Ref.~\cite[Sec.~3.2 in Supp.~Mat.]{anshu_quantifying_2018} to derive a converse
to the convex split lemma.
\begin{theorem}[Lower bound on $n_{\epsilon}$]
	\label{thm:lower}
	For a given $\beta$ and $\epsilon > 0$, and a Hamiltonian $H_R$ satisfying the energy subspace condition
	(see Def.~\ref{def:ESC}), we have
	\begin{equation}
	n_\epsilon \geq 2^{\Dmax{2 \sqrt{\epsilon}+\delta}\left( \omega_R \| \tau_{\beta}(H_R) \right)},
	\end{equation}
	where $\delta = \norm{\Delta(\omega_R) - \omega_R}_1$ quantifies the distance from the state of the region $\omega_R$
	and its decohered version $\Delta(\omega_R)$.
\end{theorem}
\begin{proof}
	For the sake of simplicity, in the following we refer to the initial state as $\rho_{RB} = \omega_R \otimes
	\tau_{\beta}(H_R)^{\otimes n_{\epsilon} - 1}$, and to the target state as $\tau_{RB} = \tau_{\beta}(H_R)^{
		\otimes n_{\epsilon}}$. For a fixed parameter $\epsilon > 0$, let $\hat{\chn} \in \mathfrak{E}_{n_{\epsilon}}$
	be the (not necessarily unique) channel such that
	\begin{equation}
	\norm{ \hat{\chn} \left( \rho_{RB} \right) - \tau_{RB} }_1
	\leq \epsilon.
	\end{equation}
	Let us now introduce the channel $\Delta$, that decoheres the system in the energy eigenbasis of the total Hamiltonian $H = \sum_{i=1}^{n_{\epsilon}} H_R^{(i)}$. In the proof of Prop.~\ref{prop:csl_semi-optimality}, we show that the action of this channel commutes with that of any channel in $\mathfrak{E}_{n_{\epsilon}}$. Thus, using monotonicity of the trace distance under CPTP maps, we have that
	\begin{equation}
	\norm{\hat{\chn} \circ \Delta \left( \rho_{RB} \right) - \tau_{RB} }_1
	=
	\norm{\Delta \circ \hat{\chn} \left( \rho_{RB} \right) - \Delta ( \tau_{RB} ) }_1
	\leq
	\norm{ \hat{\chn} \left( \rho_{RB} \right) - \tau_{RB} }_1
	\leq \epsilon,
	\end{equation}
	where we additionally used the fact that $\tau_{RB}$ is diagonal in the energy eigenbasis. In Prop.~\ref{prop:csl_optimality} we show that, when the Hamiltonian $H_R$ satisfies the ESC and the input and target states are diagonal, the optimal thermalization is achieved via the channel $\bar{\chn}_{n_{\epsilon}}$ of Eq.~\eqref{eq:channel_csl}. Thus, it holds that
	\begin{equation}
	\norm{\bar{\chn}_{n_{\epsilon}} \circ \Delta \left( \rho_{RB} \right) - \tau_{RB} }_1
	\leq
	\epsilon
	\end{equation}
	We now make use of the above  bound on the trace distance, and of the specific form of the channel
	$\bar{\chn}_{n_{\epsilon}}$ to derive a lower bound on the quantity $n_{\epsilon}$. Let us first recall
	that the channel 
	\begin{equation}
	\bar{\chn}_{n_{\epsilon}}(\cdot) = \sum_{i=1}^{n_{\epsilon}} \frac{1}{n_{\epsilon}} \,
	U^{(1,i)}_{\text{swap}} \, \cdot \, U^{(1,i) \, \dagger}_{\text{swap}},
	\end{equation}
	where the unitary operator
	$U^{(1,i)}_{\text{swap}} \in \BH{\mcH_R \otimes \mcH_B}$ swaps the state of the first subsystem (the
	region $R$) with that of the $i$-th subsystem (belonging to the bath $B$). We can dilate this map by
	introducing the following unitary operation,
	\begin{equation}
	V_{RBA} = \sum_{i=1}^{n_{\epsilon}} \, U^{(1,i)}_{\text{swap}} \otimes \ket{i}\bra{i}_A,
	\end{equation}
	which acts over the region, the bath, and an ancillary system $A$ of dimension $n_{\epsilon}$. Then,
	for an ancillary system described by the state $\rho_A = \sum_{i=1}^{n_{\epsilon}} {n_{\epsilon}}^{-1}
	\, \ket{i}\bra{i}_A$, we can define the global state $\tilde{\rho}_{RBA} = V_{RBA}
	\left( \Delta(\rho_{RB}) \otimes \rho_A \right) V_{RBA}^{\dagger}$. This is a classical-quantum state,
	and when the ancillary subsystem $A$ is traced out it coincides with $\bar{\chn}_{n_{\epsilon}} \circ \Delta \left( \rho_{RB} \right)$.
	\par
	From Lemma~\ref{lem:cq_trace_dist} it follows that there exists a classical-quantum extension of the target state
	$\tau_{RB}$, which we refer to as $\tau_{RBA}$ (where $A$ is the classical part of the state with dimension $n_{\epsilon}$),
	such that $\norm{\tilde{\rho}_{RBA} - \tau_{RBA}}_1 \leq 2 \, \sqrt{\epsilon}$. Furthermore, since $\tau_{RBA}$ is
	classical-quantum, we have that the operator inequality $\tau_{RBA} \leq \tau_{RB} \otimes \id_A$ holds, see
	Lemma.~\ref{lem:op_ineq_cq}. Using this operator inequality, the fact that $\rho_A = {\id_A}/{n_{\epsilon}}$,
	and the definition of the max-relative entropy it follows that $\Dmax{}(\tau_{RBA}|\tau_{RB} \otimes \rho_A)
	\leq \log n_{\epsilon}$. By monotonicity of this measure with respect to CPTP maps, we have that
	\begin{equation}
	\label{eq:max_rel_as_lower_bound}
	\Dmax{}\left( \Trp{BA}{ V_{RBA}^{\dagger} \, \tau_{RBA} \, V_{RBA}} \, \right| \,
	\Trp{BA}{ V_{RBA}^{\dagger} \left( \tau_{RB} \otimes \rho_A \right) V_{RBA}}
	\Big) \leq \log n_{\epsilon}.
	\end{equation}
	Let us consider the first argument of the above max-relative entropy. Using the triangle inequality, we can map the problem to the decohered case,
	\begin{align*}
	\norm{ \Trp{BA}{ V_{RBA}^{\dagger} \, \tau_{RBA} \, V_{RBA}} - \omega_R }_1
	&=
	\norm{ \Trp{BA}{ V_{RBA}^{\dagger} \, \tau_{RBA} \, V_{RBA}} - \Delta(\omega_R) + \Delta(\omega_R) - \omega_R }_1 \\
	&\leq
	\norm{ \Trp{BA}{ V_{RBA}^{\dagger} \, \tau_{RBA} \, V_{RBA}} - \Delta(\omega_R)}_1
	+
	\norm{ \Delta(\omega_R) - \omega_R }_1.
	\end{align*}
	The first term of the above sum can be further simplified,
	\begin{align*}
	\norm{ \Trp{BA}{ V_{RBA}^{\dagger} \, \tau_{RBA} \, V_{RBA}} - \Delta(\omega_R)}_1
	&\leq
	\norm{ V_{RBA}^{\dagger} \, \tau_{RBA} \, V_{RBA} -  \Delta(\rho_{RB}) \otimes \rho_A }_1 \\
	&=
	\norm{ \tau_{RBA} - V_{RBA} \left( \Delta(\rho_{RB}) \otimes \rho_A \right) V_{RBA}^{\dagger} }_1 \\
	&=
	\norm{ \tau_{RBA} - \tilde{\rho}_{RBA}} \leq 2 \, \sqrt{\epsilon},
	\end{align*}
	where in the first inequality we use the monotonicity of the trace distance under partial trace, and the fact that $\Delta(\rho_{RB}) = \Delta(\omega_R) \otimes \tau_{\beta}(H_R)^{\otimes n_{\epsilon} - 1}$ since $\tau_{\beta}(H_R)$ has no coherence in the energy eigenbasis of $H_R$. The first equality follows from the unitary invariance of the trace distance, and the last inequality follows from how we have defined $\tau_{RBA}$. Thus, the initial state of the region $\omega_R$ is within a ball of radius $2 \, \sqrt{\epsilon} + \delta$ from the state in the first argument of the max-relative entropy in Eq.~\eqref{eq:max_rel_as_lower_bound}, where $\delta = \norm{ \Delta(\omega_R) - \omega_R }_1$.
	\par
	The state in the second argument of the max-relative entropy can instead be explicitly computed,
	\begin{equation*}
	\Trp{BA}{ V_{RBA}^{\dagger} \left( \tau_{RB} \otimes \rho_A \right) V_{RBA}}
	=
	\Trp{B}{ \bar{\chn}_{n_{\epsilon}} \left( \tau_{RB} \right) }
	=
	\Trp{B}{ \tau_{RB} } = \tau_{\beta}(H_R),
	\end{equation*}
	where the first equality follows from the definition of $V_{RBA}$, while the second one from the fact
	that $\tau_{RB} = \tau_{\beta}(H_R)^{\otimes n_{\epsilon}}$ is invariant under permutation. As a
	result, we can replace Eq.~\eqref{eq:max_rel_as_lower_bound} with the following one,
	\begin{equation}
	\Dmax{2 \, \sqrt{\epsilon} + \delta}\left( \omega_R \, | \, \tau_{\beta}(H_R) \right) \leq \log n_{\varepsilon},
	\end{equation}
	which closes the proof.
\end{proof}
\section{Supplementary Note 3: Discussion on choice of bath used in model}
Given the model studied in this work, a question arises whether the choice of bath is a suitable one, since this directly relates to the meaningfulness of the lower and upper bounds on bath size, which are derived in this work. It is worthwhile to note that, in the scientific literature that addresses thermalization in many-body systems (hence in particular MBL systems), mostly long-time limits of master equations are considered~\cite{fischer2016dynamics}. Such a setting would correspond to a bath of infinite size and no memory. On the other hand, other works on MBL thermalization use a specific finite bath~\cite{luitz_how_2017,Goihl19} which is modeled as part of the chain, with regions that have low disorder. These works so far feature only thermalization with respect to local observables, which are a much more lenient measure of thermalization and do not fully capture the non-local, non-thermal aspects of the system. 

From the resource-theoretic point of view, not much has been said so far about the required bath sizes for arbitrary Hamiltonians and state transitions~\cite{scharlau2018quantum}. In all resource theoretic settings to date, the final state of the bath is relatively unimportant, since it is always discarded, and the bath acts simply as a heat source used to thermalize systems. However, discarding the bath is not a suitable consideration in the current setting, as we mention in the main text, since full thermalization of the system can always be achieved with a bath composed by one single copy of the system in a thermal state. Clearly the resource-theoretic setting is still useful in general, since it does not simply characterize transformations achieving full-thermalization, but it studies state transitions that allow work extraction (usually modeled as a transition between two non-equilibrium, non-thermal states).

A question of concern is what the minimal bath structure necessary to thermalize a system is, given globally energy-preserving interactions. The populations of each energy level on the system need to be altered, which means that the bath must contain energy gaps that are present in the system. We mention in the main text two reasons in choosing the particular bath, namely tractability of the problem and its relevance in some experiments. However, one also observes that the level structure of our bath is such that it contains all features of the system Hamiltonian, and no additional/unnecessary features (such as energy gaps that are not present in the system), which gives good reason to think that our bath should not be unnecessarily large. 

Nevertheless, one can of course also consider other bath models. An alternative, conceivable bath example to use are qubit baths, namely baths consisting of a collection of qubits where the energy gaps correspond to gaps of the system. A second example would be a collection of bosonic modes with frequencies corresponding to each energy gap present in the system~\cite{lostaglio2018elementary}; however, the dimension of such a bath will be infinite to begin with, and would automatically satisfy the lower bound derived in our work. The number of different required frequencies, however, might be an interesting research question for future work, should such bath models be of particular interest. Both of the above example suffer a particular disadvantage; since energy eigenstates of many-body systems are generally very non-local, this would mean that the operations required to thermalize system+bath, according to energy-preserving operations, would also be highly non-local and thus convoluted.

\section{Supplementary Note 4: Optimality of the stochastic swapping collision model}
In this section we show that the stochastic collision model introduced in Eq.~\eqref{eq:channel_csl} allows us, under
some assumptions on the system’s Hamiltonian and initial state, to obtain the optimal thermalization for a given number of subsystems
$n \in \N$. Specifically, we show that within the class of energy-preserving random unitary channels acting on $n$
subsystem, the map $\bar{\chn}_n$ provides the minimum value of $\epsilon$ in Eq.~\eqref{eq:eps_therm_specific},
when the state $\omega_R$ is diagonal in the energy eigenbasis, Prop.~\ref{prop:csl_optimality}. Additionally, we are able to bound the performance of the swapping collision model in the situation in which the state $\omega_R$ has coherence in the energy eigenbasis, Prop.~\ref{prop:csl_semi-optimality}, and we show that this channel is able to efficiently thermalize the system when the state has low coherence.
In order to prove the above statement, we need to introduce the following lemmata. The first lemma concerns the power
of energy-preserving random unitary channels in modifying the spectrum of a quantum state.
\begin{lemma}[Power of energy-preserving random unitary channels in modifying the spectrum of a quantum state]
	\label{lem:preserve_weights}
	Consider a Hilbert space $\mcH$, a Hamiltonian $H = \sum_E E \, \Pi_E$ where $\left\{ \Pi_E \right\}_E$ is the set of projectors
	onto the energy subspaces, and a state $\rho \in \SH{\mcH}$. Given any channel of the form $\mathcal{E}(\cdot) = \sum_k p_k \, U_k \,
	\cdot \, U_k^{\dagger}$, where $\left\{ p_k \right\}_k$ is a probability distribution and $\left\{ U_k \right\}_k$ is a set of energy
	preserving unitaries $\left[ U_k , H \right] = 0$, we have that
	\begin{equation}
	\tr \left[ \mathcal{E}(\rho) \, \Pi_E \right] = \tr \left[  \rho \, \Pi_E \right] \quad \forall \, E.
	\end{equation}
\end{lemma}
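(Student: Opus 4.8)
The plan is to compute $\tr[\mathcal{E}(\rho)\,\Pi_E]$ directly, exploiting the random-unitary structure of $\mathcal{E}$ together with the energy-preserving property $[U_k,H]=0$. First I would expand using linearity of the trace,
\begin{equation*}
\tr\left[ \mathcal{E}(\rho)\, \Pi_E \right] = \sum_k p_k \, \tr\left[ U_k\, \rho\, U_k^{\dagger}\, \Pi_E \right],
\end{equation*}
and then invoke the cyclicity of the trace to rewrite each summand as $\tr[\rho\, U_k^{\dagger}\, \Pi_E\, U_k]$.

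The key step is to upgrade the commutation relation $[U_k,H]=0$ to commutation with each individual spectral projector, i.e. $[U_k,\Pi_E]=0$ for all $E$. Since $\mcH$ is finite-dimensional, every $\Pi_E$ is a polynomial in $H$; concretely, via the Lagrange interpolation formula over the distinct eigenvalues one may write $\Pi_E = \prod_{E' \neq E} (H - E'\,\id)/(E - E')$. As $U_k$ commutes with $H$, it commutes with any polynomial in $H$, and therefore with $\Pi_E$. Consequently $U_k^{\dagger}\, \Pi_E\, U_k = U_k^{\dagger} U_k\, \Pi_E = \Pi_E$.

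Substituting this identity back into the sum yields
\begin{equation*}
\tr\left[ \mathcal{E}(\rho)\, \Pi_E \right] = \sum_k p_k\, \tr\left[ \rho\, \Pi_E \right] = \tr\left[ \rho\, \Pi_E \right],
\end{equation*}
where the final equality uses that $\{p_k\}_k$ is a probability distribution, so $\sum_k p_k = 1$. This holds for every $E$, closing the argument.

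The main -- and essentially only -- obstacle is the passage from $[U_k,H]=0$ to $[U_k,\Pi_E]=0$; everything else is routine bookkeeping with the trace. The finite-dimensionality assumed throughout makes the polynomial argument elementary and fully rigorous, whereas in an infinite-dimensional setting one would instead appeal to the spectral theorem to conclude that commuting with a self-adjoint $H$ forces commutation with all its spectral projections.
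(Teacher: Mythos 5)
Your proposal is correct and follows essentially the same route as the paper's proof: linearity, cyclicity of the trace, and the identity $U_k^{\dagger}\,\Pi_E\,U_k = \Pi_E$, which the paper asserts directly from $[U_k,H]=0$. Your Lagrange-interpolation justification of $[U_k,\Pi_E]=0$ merely fills in a step the paper leaves implicit.
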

\begin{proof}
	Due to the fact that each unitary $U_k$ commutes with the Hamiltonian $H$, we have that $U_k^{\dagger} \, \Pi_E \, U_k = \Pi_E$
	for all $k$ and $E$. Then,
	\begin{align*}
	\tr \left[ \mathcal{E}(\rho) \, \Pi_E \right] &=
	\sum_k p_k \tr \left[ U_k \, \rho \, U_k^{\dagger} \, \Pi_E \right]
	= \sum_k p_k \tr \left[ \rho \, U_k^{\dagger} \, \Pi_E \, U_k  \right]
	= \sum_k p_k \tr \left[ \rho \, \Pi_E \right] = \tr \left[ \rho \, \Pi_E \right].
	\end{align*}
\end{proof}
In the next lemma we consider a family of quantum states with fixed weights in different subspaces, and we explicitly
construct a state in this family which minimizes the distance to a given state outside the family.
\begin{lemma}[Fixed weights subspaces]
	\label{lem:opt_state}
	Consider a Hilbert space $\mcH$ and a set of orthogonal projectors $\{\Pi_i\}_i$ on $\mcH$ such that $\sum_i \Pi_i =
	\id$. Given a set of probabilities $\{p_i\}_i$, let $S = \left\{ \rho \in \SH{\mcH} \ | \ \tr \left[\rho \, \Pi_i \right] = p_i \ \forall \, i  \right\}$.
	Furthermore, assume that a state $\sigma \in \mc S(\mcH)$ has the form $ \sigma = \sum_i q_i \sigma_i $, where each
	$ \Pi_i\sigma_i\Pi_i = \sigma_i  $ and $ \tr(\sigma_i) =1 $. Then, the state $ \bar{\rho} = \sum_i p_i \sigma_i$ minimizes
	the trace distance to $\sigma$ over $S$, that is,
	\begin{align}
	\bar{\rho} \in \displaystyle\argmin_{\rho \in S} \norm{\rho - \sigma}_1.
	\end{align}
\end{lemma}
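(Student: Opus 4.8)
The plan is to reduce the minimization to block-diagonal states via a pinching (dephasing) channel and then bound each block separately. First I would record the block structure induced by the orthogonal decomposition $\mcH = \bigoplus_i \Pi_i \mcH$. By hypothesis each $\sigma_i$ is supported on the $i$-th block ($\Pi_i \sigma_i \Pi_i = \sigma_i$), so both $\sigma = \sum_i q_i \sigma_i$ and the candidate $\bar\rho = \sum_i p_i \sigma_i$ are block-diagonal. Introduce the pinching map $\mathcal{P}(\cdot) = \sum_i \Pi_i \cdot \, \Pi_i$, which is completely positive and trace preserving and which fixes any block-diagonal operator, so that $\mathcal{P}(\sigma) = \sigma$. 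Since the trace distance is contractive under CPTP maps, for every $\rho \in S$ we have $\norm{\rho - \sigma}_1 \ge \norm{\mathcal{P}(\rho) - \sigma}_1$, and $\mathcal{P}(\rho)$ still lies in $S$ because $\tr[\mathcal{P}(\rho)\,\Pi_i] = \tr[\rho\,\Pi_i] = p_i$ is unchanged by the pinching. Hence it suffices to establish the bound over block-diagonal states $\rho$.

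Next I would evaluate the value attained by $\bar\rho$ and confirm its admissibility. Because $\bar\rho - \sigma = \sum_i (p_i - q_i)\,\sigma_i$ is a direct sum of operators supported on orthogonal blocks, the trace norm splits across blocks, giving $\norm{\bar\rho - \sigma}_1 = \sum_i |p_i - q_i|\,\norm{\sigma_i}_1 = \sum_i |p_i - q_i|$, where I use $\norm{\sigma_i}_1 = \tr(\sigma_i) = 1$. The membership $\bar\rho \in S$ is immediate: $\tr[\bar\rho\,\Pi_i] = p_i\,\tr(\sigma_i) = p_i$ by orthogonality of the blocks, and $\bar\rho$ is a positive, unit-trace operator (a convex combination of states on orthogonal subspaces).

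For the matching lower bound I would take an arbitrary block-diagonal $\rho \in S$ and write $\rho = \sum_i p_i \tilde\rho_i$ with $\tilde\rho_i$ a state on the $i$-th block (blocks with $p_i = 0$ simply vanish). Then $\rho - \sigma = \sum_i (p_i \tilde\rho_i - q_i \sigma_i)$ is again block-diagonal, so the trace norm splits as $\norm{\rho - \sigma}_1 = \sum_i \norm{p_i \tilde\rho_i - q_i \sigma_i}_1$. Applying the elementary estimate $\norm{X}_1 \ge |\tr X|$ in each block yields $\norm{p_i \tilde\rho_i - q_i \sigma_i}_1 \ge |\,\tr(p_i \tilde\rho_i - q_i \sigma_i)\,| = |p_i - q_i|$, and therefore $\norm{\rho - \sigma}_1 \ge \sum_i |p_i - q_i| = \norm{\bar\rho - \sigma}_1$. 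Combined with the reduction above, this proves $\bar\rho \in \argmin_{\rho \in S} \norm{\rho - \sigma}_1$.

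The only genuinely delicate points are the two structural facts about the trace norm on which the argument rests, namely its contractivity under the pinching channel and its additivity across orthogonal blocks; both are standard but should be stated cleanly, together with the positivity bookkeeping ($q_i \ge 0$ and $\sigma_i \ge 0$) that guarantees each $\sigma_i$ is a genuine state. I expect the main conceptual step to be the reduction to block-diagonal $\rho$ via contractivity, since it is what lets the per-block bound $\norm{X}_1 \ge |\tr X|$ do all the remaining work; after that reduction the proof closes with no further calculation.
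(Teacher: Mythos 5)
Your proof is correct. It differs from the paper's in execution, though both ultimately rest on contractivity of the trace norm under a CPTP map. The paper does everything with a single measure-and-prepare channel $\mathcal{E}(\cdot) = \sum_i \tr[\,\cdot\,\Pi_i]\,\sigma_i$, which fixes $\sigma$ (since $\tr[\sigma\Pi_i]=q_i$) and collapses \emph{every} $\rho\in S$ directly onto the candidate $\bar\rho=\sum_i p_i\sigma_i$; one application of data processing then finishes the proof in a single line, with no need to evaluate the optimum or decompose anything. You instead factor this into two stages: a pinching $\mathcal{P}$ to reduce to block-diagonal competitors, followed by additivity of the trace norm over orthogonal blocks and the per-block estimate $\norm{X}_1\ge|\tr X|$ (which is itself contractivity under the trace-out map on each block). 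Your route is slightly longer but buys something the paper's does not state: the explicit value of the minimum, $\min_{\rho\in S}\norm{\rho-\sigma}_1=\sum_i|p_i-q_i|$, i.e.\ the total variation distance between the two weight distributions. One small remark in your favour: you correctly flag the positivity bookkeeping ($q_i\ge 0$, $\sigma_i\ge 0$) needed for $\bar\rho$ to be a state and for $\norm{\sigma_i}_1=\tr\sigma_i=1$; the paper's proof needs the same implicit assumption for $\mathcal{E}$ to be a quantum instrument, and neither proof works for a block with $q_i=0$ unless $\sigma_i$ is taken to be a genuine state there.
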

\begin{proof}
	In order to find the optimal state in the family $S$, we introduce the following CPTP maps which describes a quantum
	instrument, $\mathcal{E}( \cdot ) = \sum_i \tr \left[ \, \cdot \, \Pi_i \right] \, \sigma_i$. It is easy to see that $\sigma$ is
	left invariant by the above map, and that for any $\rho \in S$, $\mathcal{E}( \rho ) = \sum_i p_i \, \sigma_i$. Using the
	monotonicity of the trace distance under CPTP maps, we find that for any $\rho \in S$,
	\begin{equation*}
	\left\| \rho - \sigma \right\|_1 \geq
	\left\| \mathcal{E}( \rho) - \mathcal{E}( \sigma) \right\|_1 =
	\left\| \sum_i p_i\, \sigma_i -  \sigma \right\|_1.
	\end{equation*}
\end{proof}
The next lemma we prove require the system Hamiltonian to satisfy the following condition,
\begin{definition}[Energy subspace condition]
	\label{def:ESC}
	Given a Hamiltonian $H$, we say that it fulfills the ESC iff for any $n \in \N$, given the set of energy levels
	$\left\{E_k \right\}_{k=1}^d$ of the Hamiltonian $H$, we have that for any vectors $m, m' \in \N^d $ with
	the same normalization factor, namely $\sum_k m_k = \sum_k m'_k = n $,
	\begin{equation}
	\label{item_supp:opt_cond}
	\sum_k m_k E_k \neq \sum_k m'_k E_k .
	\end{equation}
\end{definition}
The following lemma concern the state $\rho^{(n)}$ introduced in Eq.~\eqref{eq:csl_state}. This is the central object
of the convex split lemma, as well as of our Prop.~\ref{prop:csl_optimality}. We show that, under the above assumption
on the Hamiltonian of the system, this state is uniformly distributed over each energy subspace.
\begin{lemma}[Uniform distribution over each energy subspace]
	\label{lem:csl_uniform}
	Consider a Hilbert space $\mcH$ of dimension $d$, a Hamiltonian $H^{(1)}$, and two states $\rho, \sigma \in \SH{\mcH}$.
	For any $n \in \N$, consider the state $\rho^{(n)} \in \SH{\mcH^{\otimes n}}$ defined in Eq.~\eqref{eq:csl_state}. This state
	is uniformly mixed over each energy subspace of the total Hamiltonian $H^{(n)} = \sum_{i=1}^n H^{(1)}_i$ if
	\begin{enumerate}
		\item $ H^{(1)} $ satisfies the ESC condition, and
		\item \label{item:states} The states $\rho, \sigma$ are diagonal in the eigenbasis of $H^{(1)}$.
	\end{enumerate}
\end{lemma}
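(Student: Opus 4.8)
The plan is to exploit the diagonal structure forced by assumption~\ref{item:states} and then invoke the ESC to match energy eigenspaces with permutation orbits. First I would fix the common eigenbasis $\{\ket{e_k}\}_{k=1}^d$ of $H^{(1)}$, with $H^{(1)}\ket{e_k}=E_k\ket{e_k}$, and write $\rho=\sum_k r_k\proj{e_k}$ and $\sigma=\sum_k s_k\proj{e_k}$. Since each summand $\sigma^{\otimes m-1}\otimes\rho\otimes\sigma^{\otimes n-m}$ in Eq.~\eqref{eq:csl_state} is a tensor product of operators diagonal in this basis, the whole state $\rho^{(n)}$ is diagonal in the product basis $\{\ket{e_{k_1}}\otimes\cdots\otimes\ket{e_{k_n}}\}$, which is precisely an eigenbasis of $H^{(n)}=\sum_i H^{(1)}_i$. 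Hence $\rho^{(n)}$ commutes with $H^{(n)}$ and is block-diagonal with respect to its energy eigenspaces.

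The key computation is the diagonal entry of $\rho^{(n)}$ at the product state labelled by $(k_1,\dots,k_n)$, which equals
\[
\frac{1}{n}\sum_{m=1}^n r_{k_m}\prod_{j\neq m} s_{k_j}.
\]
I would observe that this expression is a symmetric function of the index string $(k_1,\dots,k_n)$: permuting the labels merely permutes the outer sum over $m$ while leaving each product, taken over a set of positions, invariant. Consequently the diagonal entry depends only on the type vector $\mathbf{m}=(m_1,\dots,m_d)$, where $m_k$ counts how often the label $k$ occurs; equivalently, it is constant on each orbit of the permutation group $S_n$ acting by permuting the tensor factors. I prefer this set-product formulation precisely because it sidesteps any need to divide by $s_k$ and so requires no nonvanishing assumption on the weights.

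It remains to identify these permutation orbits with the energy eigenspaces of $H^{(n)}$, and this is exactly where the ESC enters. The total energy of a product basis state is $\sum_{j=1}^n E_{k_j}=\sum_k m_k E_k$, which manifestly depends only on the type $\mathbf{m}$. The ESC (Def.~\ref{def:ESC}) supplies the converse: distinct type vectors subject to the common normalization $\sum_k m_k=n$ yield distinct total energies. Therefore two product basis states lie in the same energy eigenspace of $H^{(n)}$ if and only if they share the same type vector. Combining this with the previous paragraph, the diagonal of $\rho^{(n)}$ is constant on every energy eigenspace; since $\rho^{(n)}$ is already diagonal in the product basis and the product states of a fixed total energy form an orthonormal basis of that eigenspace, its restriction to each energy subspace is a multiple of the identity, i.e.\ $\rho^{(n)}$ is uniformly mixed over each energy subspace.

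I expect the only real subtlety to be the careful bookkeeping in verifying the symmetric-function claim, whereas the conceptual crux — that the ESC is precisely the condition guaranteeing that energy eigenspaces coincide with permutation orbits — becomes immediate once the diagonal entry has been computed and recognized as a function of the type alone.
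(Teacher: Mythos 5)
Your proof is correct and follows essentially the same route as the paper's: both diagonalize $\rho^{(n)}$ in the product eigenbasis, show the diagonal entries depend only on the type vector of the index string, and use the ESC to identify type classes with the energy eigenspaces of $H^{(n)}$. Your version is slightly cleaner in that it states explicitly why checking diagonal entries suffices (the state is diagonal by assumption~\ref{item:states}) and writes the entry as a product over positions rather than as $q_i^{n_i-1}\prod_{j\neq i}q_j^{n_j}$, avoiding any implicit division by the weights of $\sigma$, but these are cosmetic differences.
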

\begin{proof}
	Let the eigenbasis of $H^{(1)}$ be $\left\{ \ket{i} \right\}_{i=1}^d$, and consider the basis of $\mcH^{\otimes{n}}$
	given by
	\begin{equation}\label{eq:totaleigenbasis}
	\left\{\ket{\mathbf{i}} = \ket{i_1} \otimes \ldots \otimes \ket{i_n} \right\}_{i_1, \ldots, i_n = 1}^d,
	\end{equation}
	where $\mathbf{i}^T = (i_1, \dots, i_n)$. It is easy to see that the set of vectors in Eq.~\eqref{eq:totaleigenbasis}
	form an eigenbasis of the total Hamiltonian $H^{(n)}$. Let us now introduce a way of partitioning the vectors
	$ \mathbf{i} $, namely by characterizing them w.~r.~t.~the number of elements in $ \mathbf{i} $ corresponding to
	each distinct energy eigenvalue on $ H^{(1)} $. For a simple example when $ H^{(1)} $ is a qubit, then this scheme
	characterizes each basis vector $ |\mathbf{i}\rangle $ by the Hamming weight of $ \mathbf{i} $. Given a tuple
	\begin{equation}
	\label{eq:def_tuple_type}
	\mathbf{n} := \left( n_1, n_2 , \ldots, n_d \right),
	\end{equation}
	let $I_{\mathbf{n}}$ denote the set of vectors $\mathbf{i}$ such that if $ \mathbf{i}\in I_{\mathbf{n}} $, then
	$ \mathbf{i} $ contains $ n_i $ elements that are equal to $ i $. Furthermore, let $\Pi_{\mathbf{n}} =
	\sum_{\mathbf{i} \in I_{\mathbf{n}}} \proj{i}$ denote the projector onto this subspace.
	\par
	Note that basis vectors $ |\mathbf{i}\rangle,|\mathbf{i'}\rangle $, where $ \mathbf{i},\mathbf{i'}\in I_{\mathbf{n}}$
	correspond to the same tuple ${\mathbf{n}}$, have the same energy. On the other hand, the ESC guarantees that if $ \mathbf{i},\mathbf{i'} $ are not in the same set $I_{\mathbf{n}}$, then they belong to different
	energy subspaces. In other words, the set of $ \lbrace \Pi_\mathbf{n}\rbrace $ coincides with the set of energy
	subspaces of the total Hamiltonian $H^{(n)}$. We now want to show that $\Pi_\mathbf{n} \rho^{(n)} \Pi_\mathbf{n}
	\propto \id$, namely that $ \rho^{(n)} $ is uniform in a fixed energy subspace. To do so, we may calculate the overlap
	$ \bra{\mathbf{i}} \rho^{(n)}\ket{\bf i} $ and show that it does not depend on the particular basis vector $ \ket{\bf i} $,
	but only the tuple $ {\bf n} $ such that $ {\bf i} \in I_{\bf n} $.
	\par
	Given a particular $ \ket{\bf i} $ corresponding to tuple $ {\bf n} $, notice that for each $ i = 1,\dots, d $, each
	single-system state $\ket{i}$ describes a total $n_i$ number of the $n$ subsystems; and we denote the indices
	of these subsystems as $\{ j^{(i)}_{\ell} \}_{\ell=1}^{n_i}$.  Let us first observe that for all $ i=1,\dots,d $
	and $ \ell = 1,\dots,n_i $, the following overlap holds,
	\begin{equation}
	\label{eq:overlap_elem_mixture}
	\bra{\bf i} \sigma_1 \otimes \ldots \otimes \rho_{j^{(i)}_{\ell}} \otimes \ldots \otimes \sigma_n \ket{\bf i}
	=
	p_i \, Q_i,
	\end{equation}
	where $p_i = \bra{i} \rho \ket{i}$ and $Q_i = q_i^{n_i - 1} \prod_{j \neq i}^d q_j^{n_j}$, with $q_j = \bra{j} \sigma \ket{j}$.
	Here, one can already observe that the particular location, characterized by $ \ell $, does not affect the overlap;
	Eq.~\eqref{eq:overlap_elem_mixture} is fully characterized by $ i $. We now proceed to rewrite the state $ \rho^{(n)} $
	can be rewritten as
	\begin{align}\label{eq:tau_n}
	\rho^{(n)} &= \frac{1}{n} \sum_{j=1}^n \sigma_1\otimes\cdots\otimes\rho_j\otimes\sigma_n
	=\frac{1}{n} \sum_{i=1}^d\sum_{\ell =1}^{n_i} \sigma_1 \otimes \ldots \otimes \rho_{j^{(i)}_{\ell}} \otimes \ldots \otimes \sigma_n.
	\end{align}
	Using Eq.~\eqref{eq:overlap_elem_mixture} and \eqref{eq:tau_n} together, we find that  \begin{align*}
	\bra{\bf i} \rho^{(n)} \ket{\bf i} 
	&= \sum_{i=1}^d \sum_{\ell=1}^{n_i} \frac{1}{n} \,
	\bra{\bf i} \sigma_1 \otimes \ldots \otimes \rho_{j^{(i)}_{\ell}} \otimes \ldots \otimes \sigma_n \ket{\bf i}
	= \sum_{i=1}^d \frac{n_i}{n} p_i \, Q_i.
	\end{align*}
	Note that $ \lbrace p_i\rbrace $ and $ \lbrace Q_i\rbrace $ are fully determined by $ \rho $, $ \sigma $, $ H^{(1)} $ and
	$ n $, which are fixed from the beginning. Therefore, for all ${\bf i} \in I_{\bf n}$, the overlap $ \bra{\bf i}\rho^{(n)}\ket{\bf i} $
	depends only on $ \bf n $, which concludes the proof.
\end{proof}
We are now able to prove the optimality of the stochastic collision model $\bar{\chn}_n$, within the class of channels
$\mathfrak{E}_n$, for the thermalization of a system in contact with an external bath.
\begin{proposition}[Optimality of the stochastic collision model for diagonal states]
	\label{prop:csl_optimality}
	Given $ n \in\N$, let $H^{(1)}$ be a Hamiltonian satisfying the ESC as defined in Def.~\ref{def:ESC}. Then, for any two states $\rho,
	\sigma \in \SH{\mcH}$ diagonal in the energy eigenbasis, the channel $\bar{\chn}_n$ of Eq.~\eqref{eq:channel_csl}
	allows to achieve the optimal thermalization, that is
	\begin{align}
	\bar{\chn}_n \in \argmin_{\mathcal{E} \in \mathfrak{E}_n}
	\norm{\mathcal{E}(\rho \otimes \sigma^{\otimes n-1})- \sigma^{\otimes n}}_1.
	\end{align}
\end{proposition}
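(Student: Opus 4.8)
The plan is to reduce the optimization over $\mathfrak{E}_n$ to the more tractable class $\mathfrak{R}_n$ of energy-preserving random unitary channels, and then to pin down the minimizer explicitly using the fixed-weight structure forced by the ESC. First I would invoke Corollary~\ref{cor:E_sub_R} to note that $\mathfrak{E}_n \subseteq \mathfrak{R}_n$, and recall that $\bar{\chn}_n$ of Eq.~\eqref{eq:channel_csl} itself lies in $\mathfrak{R}_n$, since the swaps commute with $H^{(n)}$ because all subsystems carry the same Hamiltonian. Hence it suffices to show that $\bar{\chn}_n$ minimizes $\norm{\mathcal{E}(\rho \otimes \sigma^{\otimes n-1}) - \sigma^{\otimes n}}_1$ over all $\mathcal{E} \in \mathfrak{R}_n$.

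The crux is to describe both the attainable output states and the target in terms of the energy subspaces of $H^{(n)}$. Under the ESC these subspaces are exactly the combinatorial subspaces $\{\Pi_{\mathbf{n}}\}$ indexed by the occupation tuples $\mathbf{n} = (n_1, \ldots, n_d)$ introduced in Lemma~\ref{lem:csl_uniform}. By Lemma~\ref{lem:preserve_weights}, every $\mathcal{E} \in \mathfrak{R}_n$ preserves the subspace weights, so the output $\mathcal{E}(\rho \otimes \sigma^{\otimes n-1})$ always lies in the family $S = \{\rho' : \tr[\rho' \, \Pi_{\mathbf{n}}] = p_{\mathbf{n}} \ \forall \, \mathbf{n}\}$ with $p_{\mathbf{n}} := \tr[(\rho \otimes \sigma^{\otimes n-1}) \, \Pi_{\mathbf{n}}]$. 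I would then observe that, because $\sigma$ is diagonal in the energy eigenbasis, $\sigma^{\otimes n}$ assigns the same weight $\prod_i q_i^{n_i}$ to every basis vector of $I_{\mathbf{n}}$, so that $\sigma^{\otimes n} = \sum_{\mathbf{n}} q_{\mathbf{n}} \, \sigma_{\mathbf{n}}$ with $\sigma_{\mathbf{n}} = \Pi_{\mathbf{n}}/\tr(\Pi_{\mathbf{n}})$ the maximally mixed state on $I_{\mathbf{n}}$ and $q_{\mathbf{n}} = \tr[\sigma^{\otimes n} \, \Pi_{\mathbf{n}}]$. This is precisely the hypothesis of Lemma~\ref{lem:opt_state}.

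Applying Lemma~\ref{lem:opt_state} with projectors $\{\Pi_{\mathbf{n}}\}$ and target $\sigma^{\otimes n}$ then exhibits a minimizer over $S$, namely $\bar{\rho} = \sum_{\mathbf{n}} p_{\mathbf{n}} \, \sigma_{\mathbf{n}}$, i.e.\ the state carrying the input weights but maximally mixed inside each energy subspace. It remains to check that $\bar{\chn}_n$ actually outputs $\bar{\rho}$. Its output is the state $\rho^{(n)}$ of Eq.~\eqref{eq:csl_state}, which by Lemma~\ref{lem:csl_uniform} (using the ESC together with the diagonality of $\rho, \sigma$) is uniformly mixed within each $\Pi_{\mathbf{n}}$; and since $\bar{\chn}_n \in \mathfrak{R}_n$ it carries the correct weights $p_{\mathbf{n}}$. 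A state that is uniform within each subspace with prescribed subspace weights is determined uniquely as $\sum_{\mathbf{n}} p_{\mathbf{n}} \, \Pi_{\mathbf{n}}/\tr(\Pi_{\mathbf{n}})$, so $\rho^{(n)} = \bar{\rho}$; hence $\bar{\chn}_n$ attains the minimum and the proof is complete.

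The main obstacle I anticipate is the bookkeeping at the two places where the ESC is essential: first in asserting that the energy subspaces of $H^{(n)}$ coincide with the combinatorial subspaces $\{\Pi_{\mathbf{n}}\}$, so that the weight preservation of Lemma~\ref{lem:preserve_weights} is fine enough to pin down the state; and second in ensuring that Lemma~\ref{lem:csl_uniform} applies, so that the output of $\bar{\chn}_n$ is genuinely uniform on each subspace. If the ESC failed, several tuples could share one energy, the target $\sigma^{\otimes n}$ would no longer be uniform within an energy subspace, and the true minimizer $\bar{\rho}$ would generally differ from the maximally-mixed-per-subspace state produced by $\bar{\chn}_n$ — which is exactly the mechanism behind the counterexamples alluded to in the main text.
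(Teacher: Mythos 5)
Your proposal is correct and follows essentially the same route as the paper's own proof: reduce $\mathfrak{E}_n$ to energy-preserving random unitary channels via Corollary~\ref{cor:E_sub_R}, fix the subspace weights with Lemma~\ref{lem:preserve_weights}, identify the per-subspace-uniform state as the minimizer via Lemma~\ref{lem:opt_state}, and conclude with Lemma~\ref{lem:csl_uniform} that $\bar{\chn}_n$ outputs exactly that state. Your explicit justification that $\sigma^{\otimes n}$ is uniform within each combinatorial subspace, and your identification of the energy subspaces with the tuples $\{\Pi_{\mathbf{n}}\}$ under the ESC, are details the paper leaves implicit, but they do not change the argument.
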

\begin{proof}
	Let us first notice that Cor.~\ref{cor:E_sub_R} tells us that the family of channels $\mathfrak{E}_n$ is a subset of the class of
	energy-preserving random unitary channels. In Lem.~\ref{lem:preserve_weights}, we have shown that this latter class of channels
	cannot modify the weight associated with each energy subspace in the initial state $\rho \otimes \sigma^{\otimes n-1}$. Thus,
	the channels in $\mathfrak{E}_n$ can only modify the form of the distribution in each of these subspaces. Furthermore, in
	Lem.~\ref{lem:opt_state} we showed that the state minimizing the trace distance to the target state $\sigma^{\otimes n}$ is the
	one with the same distribution over each energy subspace and with no coherence in the energy eigenbasis (since the target state is diagonal).
	It is easy to see that $\sigma^{\otimes n}$ has a uniform distribution over each energy subspace. Thus, a channel $\epsilon \in \mathfrak{E}_n$
	minimizes the trace distance $\norm{\mathcal{E}(\rho \otimes \sigma^{\otimes n-1})- \sigma^{\otimes n}}_1$ if its output
	state is uniform over the energy subspaces. But in Lem.~\ref{lem:csl_uniform} we showed that, if the Hamiltonian $H^{(1)}$
	satisfies ESC, then the state $\rho^{(n)} = \bar{\chn}_n(\rho \otimes \sigma^{\otimes n-1})$ is uniform over
	the energy subspaces , which concludes the proof.
\end{proof}
It is worth noting that the above proposition applies to diagonal states only. In the context of MBL systems, the thermal
state $\tau_{\beta}(H_R)$ is (by construction) diagonal in the energy eigenbasis, but the same needs not to apply for the
reduced state $\omega_R$ of the infinite-time average. For this reason, we introduce the following proposition, characterizing the limitations of the collisional model of Eq.~\eqref{eq:channel_csl} when the initial state of the region has coherence in the energy eigenbasis.
\begin{proposition}[Thermalization bound for the stochastic collision model]
	\label{prop:csl_semi-optimality}
	Given $n \in\N$, let $H^{(1)}$ be a Hamiltonian satisfying the ESC as defined in Def.~\ref{def:ESC}. Then, for any two states $\rho,
	\sigma \in \SH{\mcH}$, where $\sigma$ is diagonal in the energy eigenbasis, the following bound on the thermalization achievable
	via the channel $\bar{\chn}_n$ of Eq.~\eqref{eq:channel_csl} holds,
	\begin{align}
	\label{eq:semi-opt_bound}
	\norm{\bar{\chn}_n(\rho \otimes \sigma^{\otimes n-1})- \sigma^{\otimes n}}_1
	\leq
	\norm{\chn_{\mathrm{opt}}(\rho \otimes \sigma^{\otimes n-1})- \sigma^{\otimes n}}_1
	+
	\norm{\rho - \Delta(\rho)}_1,
	\end{align}
	where $\chn_{\mathrm{opt}} \in \mathfrak{E}_n$ is the channel achieving the optimal thermalization, and $\Delta(\rho)$ is the
	decohered version of the state $\rho$.
\end{proposition}
\begin{proof}
	As first step, let us notice that the decohering channel $\Delta_n$, which removes coherence in the energy eigenbasis, is defined as $\Delta_n(\cdot) = \sum_E \Pi_E \, \cdot \, \Pi_E$, where $\Pi_E$ is the eigenprojector of the total Hamiltonian $H = \sum_{i=1}^n H^{(1)}_i$ associated with the energy $E$. Since the set of channels we are optimizing over is a subset of energy-preserving random unitary channels, see Cor.~\ref{cor:E_sub_R}, it is easy to show that the action of $\Delta_n$ commutes with that of any channel $\chn \in \mathfrak{E}_n$,
	\begin{align}
	\label{eq:comm_relation_decoh}
	\Delta \circ \chn
	&=
	\sum_E \Pi_E \, \left( \sum_k p_k \, U_k \, \cdot \, U_k^{\dagger} \right) \, \Pi_E
	=
	\sum_{E,k} p_k \left( \Pi_E U_k \right) \, \cdot \, \left( \Pi_E U_k \right)^{\dagger}
	=
	\sum_{E,k} p_k \left( U_k \Pi_E \right) \, \cdot \, \left( U_k \Pi_E \right)^{\dagger} \nonumber \\
	&=
	\sum_k p_k \, U_k \left( \sum_E \Pi_E \, \cdot \, \Pi_E \right) U_k^{\dagger}
	=
	\chn \circ \Delta,
	\end{align}
	where we used $\chn(\cdot) = \sum_k p_k \, U_k \, \cdot \, U_k^{\dagger}$, and the third equality follows from the fact that $[U_k,H] = 0$ for all $k$. Furthermore, due to the form of the global Hamiltonian $H$ we have that, when $\sigma$ is diagonal in the energy eigenbasis, $\Delta_n(\rho \otimes \sigma^{\otimes n-1}) = \Delta_1(\rho) \otimes \sigma^{\otimes n-1}$, where $\Delta_1$ decoheres with respect to the energy eigenbasis of $H^{(1)}$. For simplicity, in the following we suppress the subscript from the map $\Delta_n$, since the number of subsystems the map is action over should be clear from its argument.
	\par
	Let us now consider the trace distance between the output of the channel $\bar{\chn}_n$ and the target state,
	\begin{align}
	\label{eq:out_coh_bound}
	\norm{\bar{\chn}_n(\rho \otimes \sigma^{\otimes n-1}) - \sigma^{\otimes n}}_1
	&=
	\norm{\bar{\chn}_n(\rho \otimes \sigma^{\otimes n-1}) - \Delta \circ \bar{\chn}_n(\rho \otimes \sigma^{\otimes n-1})
		+ \Delta \circ \bar{\chn}_n(\rho \otimes \sigma^{\otimes n-1}) - \sigma^{\otimes n}}_1 \nonumber \\
	&\leq
	\norm{\bar{\chn}_n(\rho \otimes \sigma^{\otimes n-1}) - \Delta \circ \bar{\chn}_n(\rho \otimes \sigma^{\otimes n-1})}_1
	+
	\norm{\Delta \circ \bar{\chn}_n(\rho \otimes \sigma^{\otimes n-1}) - \sigma^{\otimes n}}_1,
	\end{align}
	where we have used triangle inequality. Let us consider the first term of the last line above,
	\begin{align}
	\label{eq:coh_dist_bound}
	\norm{\bar{\chn}_n(\rho \otimes \sigma^{\otimes n-1}) - \Delta \circ \bar{\chn}_n(\rho \otimes \sigma^{\otimes n-1})}_1
	&=
	\norm{\bar{\chn}_n(\rho \otimes \sigma^{\otimes n-1}) - \bar{\chn}_n(\Delta(\rho) \otimes \sigma^{\otimes n-1})}_1
	\leq
	\norm{\rho \otimes \sigma^{\otimes n-1} - \Delta(\rho) \otimes \sigma^{\otimes n-1}}_1 \nonumber \\
	&=
	\norm{\rho - \Delta(\rho)}_1
	\end{align}
	where the first equality follows from Eq.~\eqref{eq:comm_relation_decoh}, the inequality from the monotonicity of the
	trace distance with respect to CPTP maps, and the last equality from the fact that $\sigma$ is a normalized state.
	The second term of Eq.~\eqref{eq:out_coh_bound} can instead be bounded as follows,
	\begin{align}
	\label{eq:opt_bound_coh}
	\norm{\Delta \circ \bar{\chn}_n(\rho \otimes \sigma^{\otimes n-1}) - \sigma^{\otimes n}}_1
	&=
	\norm{\bar{\chn}_n(\Delta(\rho) \otimes \sigma^{\otimes n-1}) - \sigma^{\otimes n}}_1
	\leq
	\norm{\chn_{\mathrm{opt}}(\Delta(\rho) \otimes \sigma^{\otimes n-1}) - \sigma^{\otimes n}}_1 \nonumber \\
	&=
	\norm{\Delta \circ \chn_{\mathrm{opt}}(\rho \otimes \sigma^{\otimes n-1}) - \Delta(\sigma^{\otimes n})}_1
	\leq
	\norm{\chn_{\mathrm{opt}}(\rho \otimes \sigma^{\otimes n-1}) - \sigma^{\otimes n}}_1
	\end{align}
	where, again, the first equality follows from the fact that the actions of $\Delta$ and $\bar{\chn}_n$ commute with
	each other. The first inequality follows from Prop~\eqref{prop:csl_optimality}, where we have shown that $\bar{\chn}_n$
	is the optimal channel when the input and target states are diagonal; the channel $\chn_{\mathrm{opt}} \in \mathfrak{E}_n$
	is instead the optimal channel when the input state $\rho$ is not dechoered. The second equality follows from
	Eq.~\eqref{eq:comm_relation_decoh} and the fact that $\sigma$ is diagonal. The final inequality follows from the
	monotonicity of the trace distance. Combining Eqs.~\eqref{eq:coh_dist_bound} and \eqref{eq:opt_bound_coh} into
	Eq.~\eqref{eq:out_coh_bound} proves the proposition.
\end{proof}
The above result provides information on the thermalization power of the channel $\bar{\chn}_n$ for states with coherence in the energy eigenbasis. For systems in the MBL phase, one expects almost all the eigenstates of the Hamiltonian to be close to product states~\cite{friesdorf_many-body_2015}, and hence $\omega_R$ should have relatively small and strongly decaying off-diagonal terms in the eigenbasis of $H_R$. To support this statement, we numerically compute the coherence in the energy eigenbasis contained in the reduced state $\omega_R$ for the disordered Heisenberg chain we studied in the main text, see Supplementary Figure~\ref{fig:coh_reduced}. As a result, the above proposition tells us that the channel $\bar{\chn}_n$ is able to efficiently thermalizing an MBL system, conditioned on the fact that the ESC condition is satisfied.
\begin{figure}[t]
	\centering
	\includegraphics[width=0.45\textwidth]{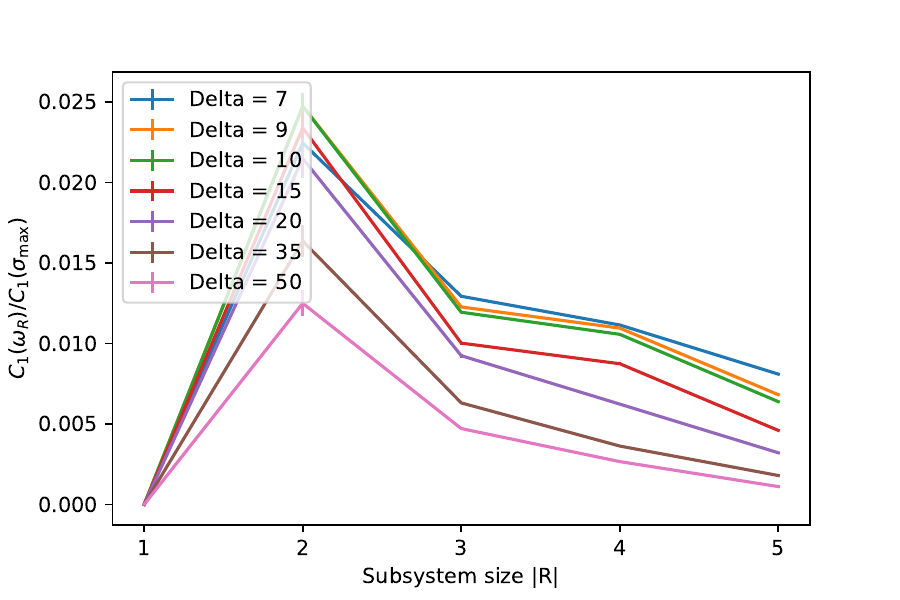}
	\includegraphics[width=0.45\textwidth]{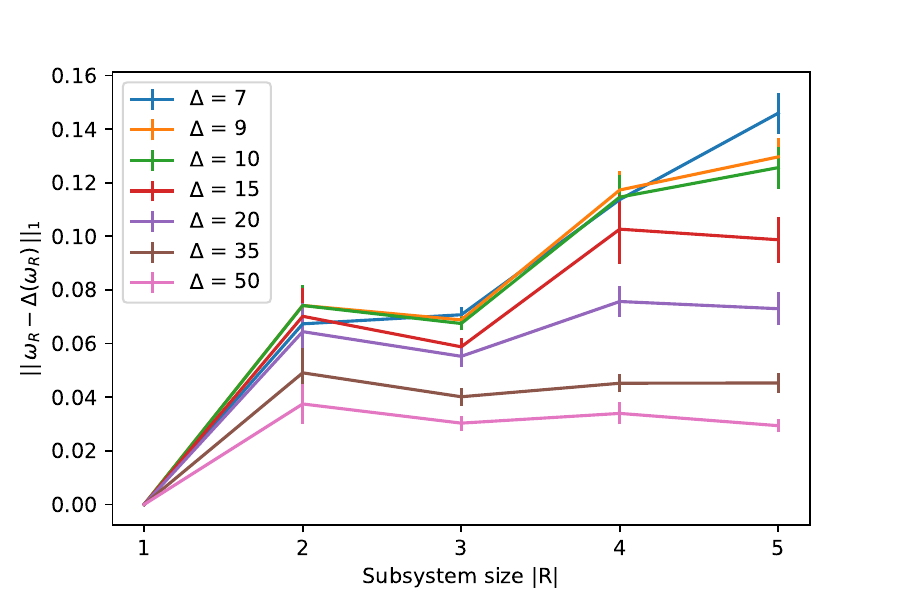}
	\caption{We consider the disordered Heisenberg chain with $L=12$ sites for different disordered strengths. We compute the coherence in the energy eigenbasis (for the operator $H_R$) of the state $\omega_R$, where $R$ is a region in the chain with variable size. {\bf Left.} We compute the $l_1$-norm of coherence, defined as $C_1(\rho) = \sum_{i \neq j} |\rho_{ij}|$, see for instance Ref.~\cite{streltsov_maximal_2018}, for the state $\omega_R$ as a function of the size of the region $|R|$. Since this measure depends on the dimension of the state, we normalize it by its maximum value achieved by the state $\sigma_{\text{max}} = \ket{\psi_{\text{max}}}\bra{\psi_{\text{max}}}$, where $\ket{\psi_{\text{max}}}$ is an element of a mutually unbiased basis for the energy eigenbasis. We can see that coherence in $\omega_R$ is always below the $2.5\%$ of the maximum possible coherence, and decreases as a function of the disorder, as expected. {\bf Right.} We additionally compute the measure appearing in Prop.~\ref{prop:csl_semi-optimality}, that is, the trace distance between $\omega_R$ and its decohered version $\Delta(\omega_R)$. We find that, for high disorder strengths (that is, during the MBL phase) this measure is small and it stays almost constant as $|R|$ increases (with the exception of the increase occurring between $|R| = 1$ and $2$). Both measures have been derived for $100$ random realization of the landscape potential.}
	\label{fig:coh_reduced}
\end{figure}
\par
The other assumption in Prop.~\ref{prop:csl_optimality} and \ref{prop:csl_semi-optimality} involves the energy gaps of the region Hamiltonian $H_R$. Due to the noise affecting the Hamiltonian of the spin chain, it seems a reasonable assumption to have non-degenerate energy gaps which could, at least up to a given $n \in \N$, satisfy this condition. A more detail discussion on this property is given in the main text, and in the following we clarify the limitations of the swapping collision model studied in this section.
\begin{remark}[Role of trivial Hamiltonians]\label{rem:trivH}
	For the case of trivial Hamiltonians, which clearly violates the ESC, the target thermal state is simply
	the maximally mixed state. Since we allow for the use of random unitary channels, thermalization in this case can already
	be achieved without further bath copies. Another way to put it is that the usage of bath copies and randomness coincide
	fully. The results in Ref.~\cite{boes_catalytic_2018} imply that only an amount of $ \log d $ of randomness (bits), are needed
	to perform this task, where $d = \dim \mcH$.
\end{remark}
\subsection{Limitation of the stochastic swapping collision model}
We have seen that, while the channel $\bar{\chn}_n$ can be proven to be optimal, necessary conditions on the Hamiltonian
follow. When such conditions are dropped, we in fact know of cases where this channel is non-optimal (see
Remark~\ref{rem:trivH} for example), in the sense that there exists a much more efficient protocol using a
smaller number of bath copies. The case in Remark~\ref{rem:trivH} is somewhat less interesting since considering
trivial Hamiltonians reduces the bath to a purely randomness resource, without any heat considerations whatsoever.
In this section, we provide a counter example using non-trivial Hamiltonians.
\par
One implication of the restriction given by the ESC of Def.~\ref{def:ESC} is that energy gaps of a single copy of the
system cannot be degenerate. The following counter-example is constructed then as follows; let $ H^{(1)} =
\sum_{i=1}^4 E_i \Pi_{E_i}$ be a 4-level Hamiltonian such that $ E_2 - E_1 = E_4 - E_3 $. Furthermore, let $ \tau $
be a particular thermal state of $ H^{(1)} $ with eigenvalues $ \left( p_1, \cdots, p_4 \right)$ denoting the thermal
occupations. To show that $\bar{\chn}_n$ can be non-optimal in general, let us consider simply the usage of one
copy of the bath.  Due to the degeneracy of energy gaps assumed, the energy subspace for global energy 
\begin{equation}
\label{eq:deg_subspace}
E_t = E_1 + E_4 = E_2+E_3 ,
\end{equation}
contains two different types (as characterized by the tuple in Eq.~\eqref{eq:def_tuple_type}). In particular, if we
denote $ \Pi_{i,j} = |i,j\rangle\langle i,j| + |j,i\rangle\langle j,i|$, then the projector on energy subspace $ E_t $ can
be written as $ \Pi_{E_t} = \Pi_{1,4} + \Pi_{2,3} $.
\par
Let $ \tau^{(2)} := \tau^{\otimes 2} $ denote the target state. We know that the total weight within
this subspace is $ p_1p_4+p_2p_3+p_3p_2+p_4p_1 = 4p_1p_4 $, where we know that since $ \tau $ is
a thermal state, Eq.~\eqref{eq:deg_subspace} implies that $ p_2p_3 = p_1p_4 $. Therefore, within the
$ E_t $ energy subspace, we have
\begin{equation}
\Pi_{E_t} \, \tau^{(2)} \, \Pi_{E_t} = p_1 p_4 \, \Pi_{E_t},
\end{equation}
which is a uniform distribution. What about the output state of the channel $\bar{\chn}_2$? Assuming
an initial state $ \rho = \sum_i q_i |i\rangle\langle i| $, the stochastic swapping process produces an
output state $ \rho^{(2)} = \bar{\chn}_2(\rho) = \frac{1}{2} \left( \rho\otimes\tau + \tau\otimes\rho \right)$.
Within the same energy subspace, the total weight is $ q_1p_4+q_2p_3+q_3p_2+q_4p_1 $, and the
distribution is
\begin{equation}
\Pi_{E_t} \, \rho^{(2)} \, \Pi_{E_t} =  \frac{q_1 p_4+q_4p_1}{2} \cdot \Pi_{1,4} + \frac{q_2p_3+q_3p_2}{2}\Pi_{2,3},
\end{equation}
which is uniform across the subspace for each type.
\begin{figure}[t]
	\includegraphics[width=0.5\textwidth]{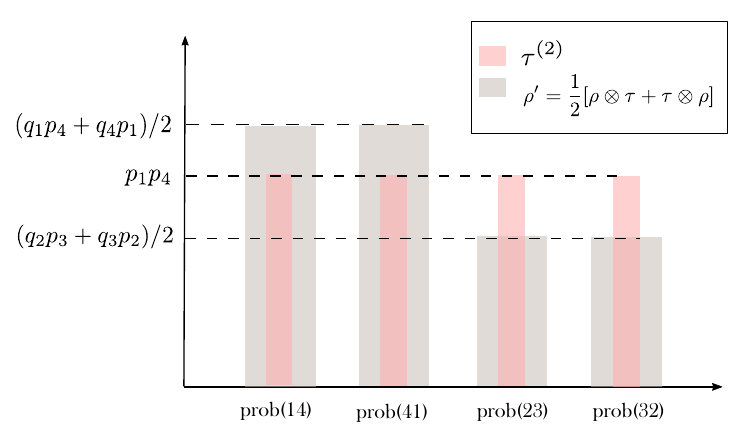}
	\caption{Comparison of eigenvalues of $ \tau^{(2)} $ and $ \rho^{(2)} $ within the energy $ E_t $ subspace. If each individual eigenvalues of $ \rho^{(2)} $ are larger (or smaller) than $ \tau^{(2)} $, then smoothening the subspace on $ \rho^{(2)} $ does not affect the trace distance. However, if we have the situation in the diagram where some eigenvalues of $ \rho^{(2)}$ are larger than that of $ \tau^{(2)} $, while some are smaller, then one can further reduce the trace distance by smoothening out the subspace on $ \rho^{(2)} $.}
	\label{fig:counter_ex}
\end{figure}
\par
Supplementary Figure~\ref{fig:counter_ex} shows the eigenvalues in this 4-dimensional subspace, compared to the distribution
given by $ \tau^{(2)} $. Note that as long as
\begin{equation}
\frac{q_1p_4+q_4p_1}{2} > p_1p_4 > \frac{q_2p_3+q_3p_2}{2}
\quad \text{or} \quad
\frac{q_1p_4+q_4p_1}{2} < p_1p_4 < \frac{q_2p_3+q_3p_2}{2},
\end{equation}
holds, one can always further decrease the trace distance by using another state $ \rho^* $ that makes the
entire subspace uniform. An example of this is when $ q_1 =1,q_2=q_3=q_4=0 $, while assuming that $ p_1\leq \frac{1}{2} $,
so that $ p_1 p_4 \leq \frac{p_4}{2} $. Note that since $ p_1 $ is the largest eigenvalue of the single-copy thermal
state, $ p_1 \geq \frac{1}{4} $ as well. Concretely, take the state $ \rho^* = \rho^{(2)} - \Pi_{E_t} \, \rho^{(2)} \, \Pi_{E_t}
+ \frac{p_4}{4}\Pi_{E_t} $. Then 
\begin{align*}
d(\rho^*,\tau^{(2)}) &= d(\rho^{(2)},\tau^{(2)}) - \frac{1}{2}\tr(|\Pi_{E_t} (\rho^{(2)}-\tau^{(2)}) \Pi_{E_t}|)
+ \frac{1}{2}\tr(|\frac{p_4}{4}\Pi_{E_t} -\Pi_{E_t}\tau^{(2)}\Pi_{E_t}|) \\
&=d(\rho^{(2)},\tau^{(2)})-\frac{p_4}{2}+ p_4(2p_1 - \frac{1}{2}) 
=d(\rho,\tau^{(2)})+ p_4 (2p_1 -1) \leq d(\rho^{(2)},\tau^{(2)}),
\end{align*}
which implies that the output of the stochastic swapping channel $ \rho^{(2)} $ cannot be optimal in terms of
minimizing the distance.
\end{document}